\title{\Large{Performance Characterization and Transmission Schemes for Instantly Decodable Network Coding in Wireless Broadcast}}
\author{Mingchao Yu, Parastoo Sadeghi, and Neda Aboutorab\\ \small{Research School of Engineering, The Australian National University, Canberra, Australia}\\ \texttt{Email: \{ming.yu,parastoo.sadeghi,neda.aboutorab\}@anu.edu.au}}
\pgfplotsset{compat=1.5}
\newcommand{\set}[1]{\mathcal{#1}}
\def\e{\boldsymbol e}
\def\E{\set{E}}
\def\F{\mathbb{F}}
\def\G{\set{G}}
\def\M{\set{M}}
\def\P{\set{P}}
\def\p{\mathbf p}
\def\v{\boldsymbol v}
\def\V{\set{V}}
\def\X{\boldsymbol X}
\def\S{\set{S}}
\def\cG{\overline{\set{G}}}
\def\mA{\boldsymbol A}
\def\T{\mathcal T}
\def\R{\mathcal R}
\def\W{\set{W}}
\newcommand{\figref}[1]{Fig.\,\ref{#1}}
\newtheorem{Theorem}{\textbf{Theorem}}
\newtheorem{Corollary}{Corollary}
\newtheorem{Definition}{\textbf{Definition}}
\newtheorem{Remark}{\textbf{Remark}}
\newtheorem{Example}{\textbf{Example}}
  {\proof}{\proofend}
\newtheorem{property}{Property}
\newtheorem{defn}{Definition}[Definition]
\begin{document}
\maketitle
\vspace{-5em}
\begin{abstract}
We consider broadcasting a block of packets to multiple wireless receivers under random packet erasures using instantly decodable network coding (IDNC).
The sender first broadcasts each packet uncoded once, then generates coded packets according to receivers' feedback about their missing packets.
We focus on strict IDNC (S-IDNC), where each coded packet includes at most one missing packet of every receiver. But we will also compare it with general IDNC (G-IDNC), where this condition is relaxed. We characterize two fundamental performance limits of S-IDNC: 1) the number of transmissions to complete the broadcast, and 2) the average delay for a receiver to decode a packet. We derive a closed-form expression for the expected minimum number of transmissions in terms of the number of packets and receivers and the erasure probability. We prove that it is NP-hard to minimize the decoding delay of S-IDNC. We also derive achievable upper bounds on the above two performance limits. We show that G-IDNC can outperform S-IDNC
without packet erasures, but not necessarily with packet erasures. Next, we design optimal and heuristic S-IDNC transmission schemes and coding algorithms with full/intermittent receiver feedback. We present simulation results to corroborate the developed theory and compare with existing schemes.
\end{abstract}

\emph{Index terms--} Wireless broadcast, network coding, throughput, decoding delay, instantly decode.
\section{Introduction}

The broadcast nature of wireless medium allows one sender to simultaneously serve multiple receivers who are interested in the same data. We consider a block-based wireless broadcast system where a sender wishes to deliver a block of data packets to a set of receivers, with the channels between the sender and the receivers are subject to independent random packet erasures. A traditional approach is to send the data packets unaltered under a receiver feedback mechanism, such as Automatic-Repeat-reQuest (ARQ) \cite{medard:arq}. This approach, though simple, is inefficient in terms of throughput, as the transmitted packets are non-innovative to the receivers who have already received them.

The advent of network coding \cite{Yeung_flow} starts a new era for high throughput network coded wireless communications \cite{katti:etal:2008,fragouli:widmer:boudec:2008,keller:drinea:fragouli:2008,
eryilmaz:ozdaglar:medard:ahmed:2008,tran2008joint,nguyen:tran:nguyen:bose:2009,lucani2009broadcasting,
lucani:TDD:fieldsize,heide_systematic_RLNC,sameh:valaee:globecom:2010,Rozner_Heuristic_clique,sadeghi:adaptive_broadcast_2009,
sadeghi:shams:traskov:2010,parastoo:yu:neda:isita,yu:parastoo:neda:2014}.
By linearly adding all data packets together with randomly chosen coefficients from a sufficiently large finite field, random linear network coding (RLNC) can almost surely achieve the optimal block completion time in block-based wireless broadcast \cite{ho:medard:koetter:karger:effros:2006,lucani2009broadcasting,lucani:TDD:fieldsize,heide_systematic_RLNC}, which is defined as the number of transmissions it takes to complete the broadcast, and is a fundamental measure of throughput due to their inverse relation under a fixed block size. Compared to other optimal codes such as Fountain codes \cite{luby2002lt,shokrollahi2006raptor}, RLNC is preferred due to its ease of implementation at the sender and extension to more complex networks and traffic settings.

However, with RLNC, data packets are block-decoded by solving a set of linear equations, which only takes place after a sufficient number of coded packets have been received. RLNC thus may suffer from heavy computational load \cite{heide_systematic_RLNC} and packet decoding delay \cite{yu:parastoo:neda:2014}, which is measured by the average time it takes for a receiver to decode a data packet. The first issue can, for example, hinder the application of RLNC for mobile receivers with limited computational capability \cite{heide_systematic_RLNC}. Meanwhile, a large packet decoding delay can be unacceptable for delay-sensitive applications such as video streaming \cite{li:idnc_video:2011,yang:sagduyu:li:Zhang:2012}.

To mitigate these issues, instantly decodable network coding (IDNC) techniques \cite{katti:etal:2008,Rozner_Heuristic_clique,sadeghi:adaptive_broadcast_2009,sameh:valaee:globecom:2010,sadeghi:shams:traskov:2010} have been introduced. With IDNC, the sender first broadcasts the data packets uncoded once. It then makes online coding decisions based on receivers' feedback about their packet reception state, under the restriction that coding/decoding is over binary field. A simple packet reception state is demonstrated in Table \ref{tab:simple_sfm}. There are two data packets, $\p_1$ and $\p_2$, and three receivers, $R_{1}$ to $R_3$, where each has a subset of $\{\p_1,\p_2\}$ and wants the rest. Consider a coded packet of $\X=\p_1\oplus\p_2$, where $\oplus$ denotes binary XOR. It has three different effects on different receivers: 1) it is instantly decodable to $R_1$, because $R_1$ can decode $\p_2$ by performing $\X\oplus\p_1$; 2) it is non-instantly decodable to $R_2$, because $R_2$ has neither $\p_1$ nor $\p_2$; and 3) it is non-innovative to $R_3$, because $R_3$ already has both $\p_1$ and $\p_2$.

There are two main types of IDNC techniques. The first one, called strict IDNC (S-IDNC) \cite{Rozner_Heuristic_clique,sadeghi:adaptive_broadcast_2009,sadeghi:shams:traskov:2010}, prohibits the transmissions of non-instantly decodable packets to any receiver. Effectively, each coded packet can include at most one wanted data packet of every receiver. The second one, called general IDNC (G-IDNC), removes this restriction to generate more coding opportunities.



There is a large body of research on G-IDNC, focusing on its throughput and decoding delay performance, coding algorithms, and transmission schemes. Early models and heuristics related to G-IDNC were proposed for index coding \cite{sprintson:min:2007}. Then G-IDNC was introduced for wireless broadcast and was graphically modeled in \cite{sameh:valaee:globecom:2010}. Although its best performance remains unidentified, powerful heuristic algorithms have been developed to improve its throughput and/or decoding delay \cite{sameh:valaee:globecom:2010,sorour:valaee:2010,sameh:density:2013,le:realtime_IDNC}, or to strike a balance between the two \cite{parastoo:tradeoff:2014}. G-IDNC transmission schemes under full/intermittent receiver feedback have been developed \cite{sorour:lossy_feedback:2011,sorour:limited_feedback:2011}. G-IDNC has also been adopted in wireless broadcast applications with hard decoding deadlines \cite{li:idnc_video:2011} or with receiver cooperation \cite{neda:2013:cooperative, karim:2014:cooperative,keshtkarjahromi2014content}.


\begin{table}
\centering
\caption{A simple packet reception state for IDNC coding}
\label{tab:simple_sfm}
\begin{tabular}{|c|c|c|}
\hline
~&$\p_1$&$\p_2$\\\hline
$R_1$&has&wants\\\hline
$R_2$&wants&wants\\\hline
$R_3$&has&has\\\hline
\end{tabular}
\end{table}

Studies on theoretical performance characterization and implementations of S-IDNC  are more limited in both range and depth. S-IDNC was graphically modeled in \cite{Rozner_Heuristic_clique}, which then proved that the minimum clique partition solution \cite{Rozner_Heuristic_clique} of the associated graph can be an S-IDNC solution that minimizes the block completion time. However, this solution does not take into account the issues of decoding delay and the robustness of coded transmissions to erasures. S-IDNC has shown to be asymptotically throughput optimal when there are up to three receivers or when the number of data packets approaches infinity \cite{li:idnc_video:2011}, but the general relation between the throughput of S-IDNC and system parameters has not been characterized before. In addition and to the best of our knowledge, the minimum packet decoding delay of S-IDNC is still unknown. Moreover, there have not been S-IDNC transmission schemes that can work with intermittent receiver feedback. Another unaddressed problem is a systematic performance comparison between S-IDNC and G-IDNC.

In this paper, we study the above problems and provide the following contributions:
\begin{enumerate}
\item We characterize the throughput performance limits of S-IDNC. Specifically, we first derive an achievable upper bound on the minimum block completion time for any given packet reception state. We then derive a closed-form expression for the expected minimum block completion time in terms of the number of packets and receivers and the erasure probability;
\item We prove that it is NP-hard to minimize the packet decoding delay of S-IDNC. We derive an upper bound on the minimum packet decoding delay in terms of the minimum block completion time;
\item We show that in the presence of erasures, the minimum clique partition solution of the S-IDNC graph as identified by \cite{Rozner_Heuristic_clique} may not result in the minimum block completion time, because it does not allow the same data packet to be repeated in different coded packets, a desired property which we refer to as packet diversity. Motivated by this fact, we develop the optimal S-IDNC coding algorithm, as well as heuristics that aim to improve packet diversity. We design S-IDNC transmission schemes under full and intermittent receiver feedback;
\item We also provide new results on how S-IDNC and G-IDNC compare. We study the relation between S-IDNC and G-IDNC graphs, and then demonstrate some scenarios under which G-IDNC can/cannot outperform S-IDNC.
\end{enumerate}

\section{System Model and Notations}
\subsection{Transmission Setup}
We consider a block-based wireless broadcast scenario, in which the sender needs to deliver a block of $K$ data packets, denoted by $\P_K=\{\p_k\}_{k=1}^K$, to $N$ receivers, denoted by $\R_N=\{R_n\}_{n=1}^N$ through wireless channels that are subject to independent random packet erasures.

Initially, the $K$ data packets are transmitted uncoded once using $K$ time slots, constituting a \emph{systematic transmission phase} \cite{heide_systematic_RLNC}. Then, each receiver provides feedback\footnote{We assume that there exists an error-free feedback link from each receiver to the sender that can be used with appropriate frequency.} to the sender about the packets it has received or missed due to packet erasures. The complete packet reception state is represented by an $N \times K$ state feedback matrix (SFM) $\mA$, where $a_{n,k}=1$ if $R_n$ has missed (and thus still wants) $\p_k$, and $a_{n,k}=0$ if $R_n$ has already received $\p_k$. The set of data packets wanted by $R_n$ is called the \emph{Wants} set of $R_n$ and is denoted by $\W_n$. The set of receivers who want $\p_k$ is called the \emph{Target} set of $\p_k$ and is denoted by $\T_k$. The size of  $\T_k$ is denoted by $T_k$. Packets with larger $T_k$ are more desired by receivers.

\begin{Example}
Consider the SFM in \figref{fig:sfm_example} with $K=6$ data packets and $N=5$ receivers. The Wants set of $R_1$ is $\W_1=\{\p_1,\p_5,\p_6\}$. The Target set of $\p_3$ is $\T_3=\{R_3,R_5\}$ and thus $T_3=2$.
\end{Example}

\begin{figure*}
\centering
\subfigure[State feedback matrix $\mA$]{\includegraphics[width=0.23\linewidth]{./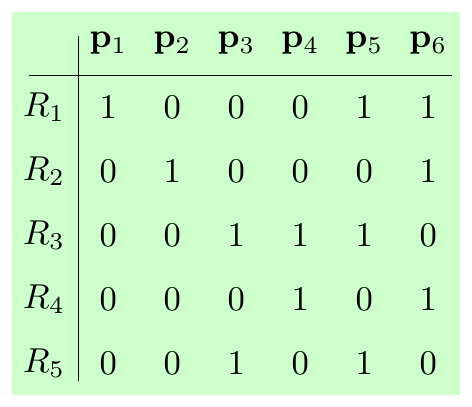}\label{fig:sfm_example}}\hspace{20pt}
\subfigure[S-Graph $\G_s$]{\includegraphics[width=0.21\linewidth]{./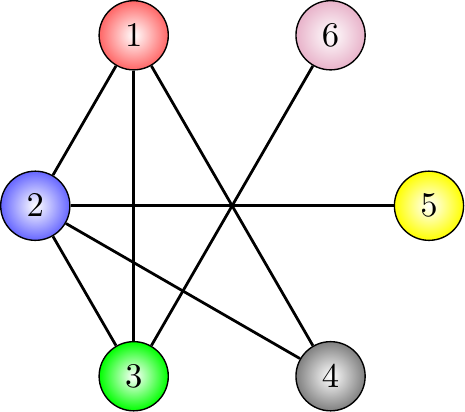}\label{fig:sg_example}}\hspace{20pt}
\subfigure[G-Graph $\G_g$]{\includegraphics[width=0.2\linewidth]{./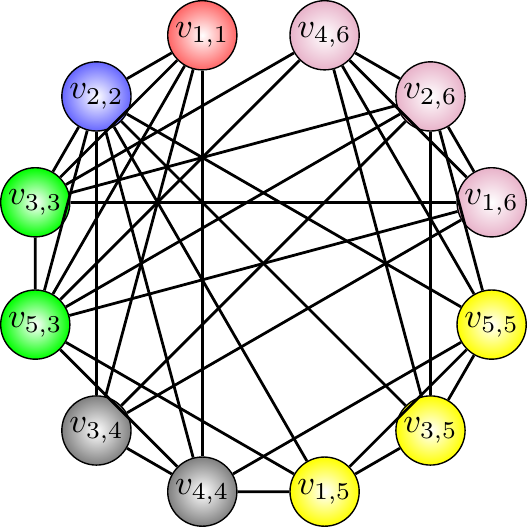}\label{fig:gg_example}}
\caption{An example of SFM and its S- and G-IDNC graphs.}
\label{fig:example}
\end{figure*}

\subsection{Coded Transmission Phase: Two Types of IDNC}
According to $\mA$, the sender generates IDNC coded packets under the binary field $\F_2$. Explicitly, IDNC coded packets are of the form $\X = \bigoplus_{\p_k\in\M} \p_k$, where $\M$ is a selected subset of $\P_K$, and is called an IDNC coding set. Obviously, $\X$ has three possible decoding effects at each receiver:
\begin{defn}
An IDNC coded packet $\X$ is instantly decodable for receiver $R_n$ if
$\M$ contains only one data packet from the Wants set $\W_n$ of $R_n$, i.e., if $|\M\cap\W_n|=1$.
\end{defn}
\begin{defn}
An IDNC coded packet $\X$ is non-instantly decodable for receiver $R_n$ if $\M$ contains two or more data packets from the Wants set $\W_n$ of $R_n$, , i.e., if $|\M\cap\W_n|>1$.
\end{defn}
\begin{defn}
An IDNC coded packet $\X$ is non-innovative for receiver $R_n$ if $\M$ contains no data packets from the Wants set $\W_n$ of $R_n$, , i.e., if $|\M\cap\W_n|=0$. Otherwise, it is innovative.
\end{defn}

Depending on which of the above three types of effects are allowed, there are two variations of IDNC. The first one is called strict IDNC (S-IDNC), which is the main subject of our study. It prohibits the transmission of any non-instantly decodable coded packets to any receivers. This restriction implies that any two data packets wanted by the same receiver cannot be coded together. We thus have the concept of conflicting and non-conflicting data packets:
\begin{Definition}
Two data packets $\p_i$ and $\p_j$ conflict if at least one receiver wants both of them, i.e., if $\exists n:\{\p_i,\p_j\}\subseteq\W_n$. Otherwise they do no conflict.
\end{Definition}

An S-IDNC coding set is thus a set of pairwise non-conflicting data packets. The conflicting states among data packets can be represented by an undirected graph $\G_s(\V,\E)$. Each vertex $\v_i\in\V$ represents a data packet $\p_i$. Two vertices $\v_i$ and $\v_j$ are connected by an edge $\e_{i,j}\in\E$ if $\p_i$ and $\p_j$ do not conflict. Thus, every complete subgraph of $\G_s$, a.k.a., a clique, represents an S-IDNC coding set. In the rest of the paper, we will use the terms ``coded packet'', ``coding set'', and ``clique'' interchangeably, and denote the last two by $\M$.

The main limitation of S-IDNC is that a coded packet which is instantly decodable for a large subset of receivers may be prohibited merely because it is non-instantly decodable for a small subset of receivers. In the second type of IDNC, called general IDNC (G-IDNC), the restriction on non-instantly decodable packets is removed to generate more coding opportunities.

G-IDNC can also be graphically modeled \cite{sorour:valaee:2010}. The difference is that, in the G-IDNC graph $\G_g(\V,\E)$, a data packet $\p_k$ wanted by different receivers are individually represented by different vertices $\v_{n,k}$, for all $a_{n,k}=1$. Consequently, the number of vertices in $\G_g$ is equal to the number of ``1''s in $\mA$. Two vertices $\v_{m,i}$ and $\v_{n,j}$ are connected by an edge if: 1) $i=j$, or 2) if $\p_i\notin\W_n$ and $\p_j\notin\W_m$. In the first case, $\p_i=\p_j$, and thus by sending $\p_i$ both $R_m$ and $R_n$ can decode. In the second case, by sending $\p_i\oplus\p_j$, $R_m$ and $R_n$ can decode $\p_i$ and $\p_j$, respectively, because they already have $\p_j$ and $\p_i$, respectively. Similar to S-IDNC, every clique of $\G_g$ represents a G-IDNC coding set.

We note that an S-IDNC coded packet is always a G-IDNC coded packet, but the reverse is not necessarily true. Below is an example of S- and G-IDNC coded packets.

\begin{Example}
Consider the SFM and its S- and G-IDNC graphs in \figref{fig:example}. The G-IDNC graph indicates that $(\v_{1,1},\v_{5,3},\v_{4,4})$ is a clique. The corresponding G-IDNC coding set is $(\p_1,\p_3,\p_4)$, and thus $\X_g=\p_1\oplus\p_3\oplus\p_4$ is a G-IDNC coded packet. $\X_g$ is instantly decodable for $R_{1}, R_4, R_5$ because they only want one data packet from $\X_g$. $\X_g$ is non-instantly decodable for $R_3$ because $R_3$ wants both $\p_3$ and $\p_4$. $\X_g$ is non-innovative for $R_2$.

Due to the existence of $R_3$, $\X_g$ is not an S-IDNC coded packet. Whereas the S-IDNC graph indicates that $(\v_{1},\v_{2},\v_{3})$ is a clique. The corresponding coding set is $(\p_1,\p_2, \p_3)$, and thus $\X_s=\p_1\oplus\p_2\oplus\p_3$ is an S-IDNC coded packet, which can be verified to also correspond to clique $(\v_{1,1},\v_{2,2},\v_{3,3})$ or clique $(\v_{1,1},\v_{2,2},\v_{5,3})$  in the G-IDNC graph.
\end{Example}

We then introduce the notion of \emph{IDNC solution}. A set of IDNC coding sets is called an IDNC solution if, upon the reception of the coded packets of all these coding sets, every receiver can decode all its wanted data packets. An S-IDNC solution is denoted by $\S_s$. The set of all S-IDNC solutions of a given SFM is denoted by $\mathbb S_s$. Similarly, we can also define $\S_g$ and $\mathbb S_g$ for G-IDNC.

For the SFM in \figref{fig:example}, by partitioning the S-IDNC graph into three disjoint cliques, we can obtain, among others, an S-IDNC solution of $\S_s=\{(\p_1,\p_4),(\p_2,\p_5),(\p_3,\p_6)\}$. Similarly, a disjoint clique partition of the G-IDNC graph is $ \{(\v_{1,1},\v_{2,2},\v_{5,3},\v_{4,4})$, $(\v_{3,3},\v_{1,6},\v_{2,6},\v_{4,6})$, $(\v_{1,5},\v_{3,5},\v_{5,5})$, $(\v_{3,4},\v_{4,4})\}$, indicating a G-IDNC solution of $\S_g =\{(\p_1,\p_2,\p_3,\p_4),(\p_3,\p_6),(\p_5),(\p_4)\}$.

To assess the performance of IDNC solutions, we now introduce our measures of throughput and decoding delay.

\subsection{Throughput and Decoding Delay Measures}
An S-IDNC solution $\S_s$ requires a minimum of $|\S_s|$ coded transmissions. We call $U_{\S_s}\triangleq |\S_s|$ the \emph{minimum block completion time} of $\S_s$. It measures the best throughput of $\S_s$ with a value of $\frac{K}{K+U_{\S_s}}$ packet per transmission. We further denote by $U_s$ the absolute minimum block completion time over all the S-IDNC solutions of $\mA$, i.e., $U_s\triangleq \min\{U_{\S_s}:\S_s\in\mathbb S_s\}$. The definition of $U_g$ for G-IDNC follows.

We measure decoding delay by \emph{average packet decoding delay} (APDD) $D$, which is the average time it takes for a receiver to decode a data packet:
\begin{equation}\label{eq:d_def}
D=\frac{1}{T}\sum_{\forall a_{n,k}=1}u_{n,k}
\end{equation}
where $u_{n,k}$ is the time index when $R_n$ decodes $\p_k$, and $T=\sum_{k=1}^KT_k$, which is also the number of ``1''s in $\mA$.
Given an IDNC solution $\S$, by letting $u_{n,k}$ be the \emph{first} time index when $\S$ allows $R_n$ to decode $\p_k$, \eqref{eq:d_def} produces the minimum APDD of $\S$. Then similar to $U_s$ (resp. $U_g$), we denote by $D_s$ (resp. $D_g$) the absolute minimum APDD over all S- (resp. G-) IDNC solutions of $\mA$. We also note that in the specific case of an S-IDNC solution $\S_s$, $u_{n,k}$ is indeed the index of the first coding set in $\S_s$ that contains $\p_k$, as every receiver who wants $\p_k$ can decode it from this coding set.

\begin{Example}
Consider the SFM in \figref{fig:sfm_example}. Suppose that an S-IDNC solution with four coded packets $\X_1=\p_1\oplus\p_2$, $\X_2=\p_3\oplus\p_6$, $\X_3=\p_4$, and $\X_4=\p_5$ are transmitted in this order. The receivers' decoding time $\{u_{n,k}\}$ are summarized in Table \ref{tab:delay_sfm}. The minimum APDD of this solution is $D_{\S_s}=(1\times2+2\times5+3\times2+4\times3)/12=2.5$.

\begin{table}[t]
\centering
\caption{The decoding delay of original data packets at the receivers}
\vspace{-1em}
\begin{tabular}{c|cccccc}
~    &$\p_1$&$\p_2$&$\p_3$&$\p_4$&$\p_5$&$\p_6$\\\hline
$R_1$& 1    & 0    & 0    & 0    & 4    & 2\\\hline
$R_2$& 0    & 1    & 0    & 0    & 0    & 2\\\hline
$R_3$& 0    & 0    & 2    & 3    & 4    & 0\\\hline
$R_4$& 0    & 0    & 0    & 3    & 0    & 2\\\hline
$R_5$& 0    & 0    & 2    & 0    & 4    & 0\\\hline
\end{tabular}
\label{tab:delay_sfm}
\end{table}

\end{Example}

In the presence of packet erasures, the sender needs to adopt a \emph{coded transmission phase}. In each coded transmission, it selects and broadcasts a coding set through erasure-prone wireless channels. We denote by $U_T$ the block completion time of this phase, and by $D_T$ the APDD of this phase, calculated as in \eqref{eq:d_def}. $U_T$ and $D_T$ measure the throughput and decoding delay performance of this phase, respectively. They vary according to the IDNC solutions, transmission schemes, and erasure patterns. But it always holds that $U_T\geqslant U_s$ and $D_T\geqslant D_s$ if S-IDNC is applied. Therefore, $U_{s}$ and $D_{s}$ reflect the performance limits of S-IDNC. Hence, we will first study these limits in the next section, and then design S-IDNC transmission schemes and coding algorithms in Sections \ref{sec:schemes} and \ref{sec:algorithms}, respectively.

\section{Performance Limits and Properties of IDNC}\label{sec:limits}
In this section, we study performance limits and properties of S-IDNC and compare it with G-IDNC.

\subsection{Absolute minimum block completion time $U_s$}
We first study the throughput limit of S-IDNC, measured by the absolute minimum block completion time $U_s$. It has been proved that $U_s$ is equal to the size of the minimum clique partition solution\footnote{The minimum clique partition solution of a graph $\G$ is the minimum set of disjoint cliques of $\G$ that together cover all the vertices.} of $\G_s$ \cite{Rozner_Heuristic_clique}, denoted by $\S_c$. This equivalence holds because of the following property:
\begin{property}\label{prpt:vertex_remove}
Removing any vertex from the S-IDNC graph does not change the connectivity of the remaining vertices.
\end{property}
This property holds because vertices in $\G_s$ represent different data packets. Thus, to remove all vertices from $\G_s$ (i.e., to complete the broadcast), at least $|\S_c|$ cliques must be removed, which yields $U_s=|\S_c|$.

According to graph theory, $|\S_c|$ is equal to the chromatic number\footnote{The chromatic number of a graph $\G$ is the minimum number of colors to color the vertices so that any two connected vertices have different colors.} $\chi(\cG_s$) of the complementary graph $\cG_s$, which has the same vertex set as $\G_s$, but has opposite vertex connectivity. We thus have $U_s=\chi(\cG_s)$. This equality enables us to answer two important questions about $U_s$: 1) how to find the $U_s$ of a given S-IDNC graph? 2) what are the statistical characteristics of $U_s$ under random packet erasures?

\subsubsection{The $U_s$ of an SFM}

The chromatic number of a graph (and thus $U_s$) has been proven to be NP-hard to find and AXP-hard to approximate \cite{Graph_theory}. But there are heuristic algorithms and bounds for it. We will develop algorithms dedicated to S-IDNC in Section \ref{sec:algorithms}, and focus on the bounds in this subsection.

Tight bounds on $\chi(\cG_s)$ exist, but are also NP-hard to find. One such example is a tight lower bound $w(\cG_s)$ \cite{Graph_coloring}, the size of the largest clique of $\cG_s$. There are also loose bounds. For example:
\begin{property}
All S-IDNC graphs with $K$ vertices and $M_0$ edges have:
\begin{equation}
U_s\geqslant\left\lceil K^2/(K+2M_0)\right\rceil
\end{equation}
where $\lceil x\rceil$ outputs the smallest integer greater than $x$.
\end{property}
This bound is due to Geller \cite{Geller_lower_bound} and its proof is omitted here. This bound is useful because it identifies the smallest \emph{achievable} $U_s$ of all the S-IDNC graphs with $K$ vertices and $M_0$ edges.

An existing upper bound on $\chi(\cG_s)$ is $\Delta(\cG_s)+1$ \cite{Graph_theory}, where $\Delta(\cG_s)$ is the largest number of edges incident to any single vertex in $\cG_s$. However, with given $K$ and $M_0$, this bound is not always achievable. To see this, assume $\cG_s$ has $K-1$ edges. Connecting these edges to the same vertex yields an upper bound of $\Delta(\cG_s)+1=K$. But no matter how we allocate these edges, there are always unconnected vertices in $\cG_s$, which indicates that $\chi(\cG_s)<K$. Hence, the upper bound is not achievable here.

We are thus motivated to derive an achievable upper bound of $U_s$ as a function of $K$ and $M_0$. We first note that a set of pairwise unconnected vertices (a.k.a. an independent set) of $\G_s$, denoted by $\V_I$, must be transmitted separately because their corresponding packets all conflict with each other. The size of $\V_I$ is $K$ when there is no edge in $\G_s$, indicating that $U_s=|\V_I|=K$. Then, whenever a new edge is added to the graph, we can maximize $U_s$ by maximizing $|\V_I|$, which means that the edge should not connect two vertices in $\V_I$ whenever possible. Explicitly, our upper bound on $U_s$ is derived iteratively:
\begin{itemize}
\item When there is no edge in the graph, we have $U_s=K$;
\item When there are $M_0=[1,K-1]$ edges, we can use them to connect $\v_1$ with $\v_2\cdots\v_K$. Since $\v_2\cdots\v_K$ are independent, we have $U_s=K-1$;
\item When there are up to $K-2$ additional edges, i.e, when $M_0=[K,2K-3]$, we can use these additional edges to connect $\v_2$ with $\v_3\cdots\v_K$. Since $\v_3\cdots\v_K$ are independent, we have $U_s=K-2$;
\item The iterations will terminate when the graph is complete, i.e., when $M_0=K(K-1)/2$ and  $U_s=1$.
\end{itemize}

It can be easily proved that any reallocation of edges will reduce the $U_s$ derived above to a smaller value. Our upper bound on $U_s$ has the following stair-case profile:
\begin{property}
All S-IDNC graphs with $K$ vertices and $M_0$ edges have:
\begin{equation}\label{eq:U_upper}
U_s\leqslant\begin{cases}
K,&M_0=0\\
K-1,&M_0=[1,K-1]\\
K-2,&M_0=[K,2K-3]\\
\cdots,&\cdots\\
1,&M_0=K(K-1)/2
\end{cases}
\end{equation}
\end{property}

\subsubsection{$U_s$ as a function of system parameters}

In addition to finding $U_s$ for a given S-IDNC graph, we are also interested in the statistical characteristics of $U_s$, for which we assume that $\mA$ (and thus $\G_s$) is obtained as a consequence of random packet erasures in the systematic transmission phase.

For wireless broadcast, a common assumption on random packet erasures is that they are i.i.d. random variables with a probability of $P_e$. Under this assumption, a similar question has already been introduced and answered for the RLNC technique. It has been shown in \cite{eryilmaz:ozdaglar:medard:ahmed:2008,xiao2010extreme,ghaderi2008reliability} that the block completion time of RLNC scales as $\mathcal O(\ln(N))$ when $K$ is a constant. Consequently, the throughput of RLNC vanishes with increasing number of receivers $N$. To prevent zero throughput, it has been proved in \cite{swapna2013throughput} that $K$ should scale faster than $\ln(N)$.

Since the throughput of RLNC is optimal, it cannot be exceeded by the throughput of S-IDNC. Hence, we can infer that the throughput of S-IDNC should also follow a vanishing behavior with increasing $N$. However, its rate and specific dependence on system parameters have not been fully characterized in the literature. In this subsection, we answer this question through the following theorem:
\begin{Theorem}\label{theo:Us}
The mean of the absolute minimum block completion time $U_s$ is a function of the block size $K$, the number of receivers $N$, and packet erasure probability $P_e$ as follows:
\begin{align}
E[U_s]&=\left[-K\left(\frac{1}{2}+o(1)\right)\log_K(1-P_e^2)\right]\cdot N\label{eq:u_parameters}\\
&=c(K,P_e)\cdot N
\end{align}
where $o(1)$ is a small term that approaches zero with increasing $K$.
\end{Theorem}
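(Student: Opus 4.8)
The plan is to recast $U_s$ as the chromatic number $\chi(\cG_s)$ of the complementary graph (already established in the excerpt), observe that $\cG_s$ is a random graph, and then invoke the classical Bollob\'as theorem on the chromatic number of $\G(n,p)$. The key modeling step is to compute the probability that two distinct vertices $\v_i,\v_j$ of $\cG_s$ are adjacent, i.e., that packets $\p_i$ and $\p_j$ \emph{conflict} in the sense of Definition~4. A conflict occurs iff some receiver wants both $\p_i$ and $\p_j$; since each receiver independently misses each packet with probability $P_e$, the probability that a fixed receiver wants both is $P_e^2$, so the probability that \emph{no} receiver wants both is $(1-P_e^2)^N$, and hence the edge probability in $\cG_s$ is $q \triangleq 1-(1-P_e^2)^N$. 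Thus $\cG_s$ is distributed as an Erd\H os--R\'enyi random graph $\G(K,q)$ — with the mild caveat that edges are not fully independent (they are only pairwise-independent given the receiver states), which is the first point that needs a careful argument: I would note that conditioning on the $N$ receiver rows makes the edges a monotone function of independent Bernoulli's, and that the standard concentration machinery (Azuma/Shamir--Spencer vertex-exposure martingale) applies verbatim because it only uses the Lipschitz property under changing one receiver's row, not full edge independence.

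Next I would apply the Bollob\'as--\L uczak--McDiarmid asymptotics: for fixed $q\in(0,1)$, $\chi(\G(K,q)) = \left(\tfrac12 + o(1)\right)\dfrac{K}{\log_b K}$ almost surely and in expectation, where $b = 1/(1-q)$ is the inverse of the non-edge probability. Here the non-edge probability is exactly $1-q = (1-P_e^2)^N$, so $b = (1-P_e^2)^{-N}$ and $\log_b K = \dfrac{\ln K}{-N\ln(1-P_e^2)} = -\dfrac{\log_K(1-P_e^2)^{-1}}{N}\cdot\text{(correction)}$; carrying the base-change through gives $\chi(\cG_s) = \left(\tfrac12+o(1)\right)\dfrac{K}{\log_b K} = \left[-K\left(\tfrac12+o(1)\right)\log_K(1-P_e^2)\right]\cdot N$, which is precisely \eqref{eq:u_parameters}. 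Taking expectations and using the fact that $\chi(\G(K,q))$ is sharply concentrated (so $E[\chi]$ matches the typical value up to lower-order terms absorbed into $o(1)$) finishes the identification $E[U_s] = c(K,P_e)\cdot N$.

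There are two honest subtleties to handle. First, the Bollob\'as asymptotic is stated for $K\to\infty$ with $q$ \emph{fixed}; but here $q = 1-(1-P_e^2)^N$ depends on $N$, so strictly the clean $\left(\tfrac12+o(1)\right)$ form requires $N$ to be held fixed (or growing slowly) while $K\to\infty$ — this is consistent with the theorem's phrasing, since the $o(1)$ is declared to vanish ``with increasing $K$'' and $N$ appears only as the linear multiplier. I would state this regime explicitly. Second, and this is the step I expect to be the main obstacle, one must make sure the base of the logarithm is handled correctly: the natural statement has $\log_{1/(1-q)}K$ in the denominator, and pulling the factor $N$ out to land on $\log_K(1-P_e^2)$ requires the identity $\dfrac{1}{\log_{1/(1-q)}K} = -\dfrac{\ln(1-q)}{\ln K} = -N\dfrac{\ln(1-P_e^2)}{\ln K} = -N\log_K(1-P_e^2)$, so the $o(1)$ must be verified to be uniform enough that multiplying by the constant $N$ and by $K/\ln K$ does not disturb the error estimate — routine, but it is where the bookkeeping actually lives. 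Everything else (edge-probability computation, martingale concentration, taking expectations) is standard once the random-graph model is in place.
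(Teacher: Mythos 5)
Your proposal is correct and follows essentially the same route as the paper: identify $U_s$ with $\chi(\cG_s)$, compute the conflict/edge probability $1-(1-P_e^2)^N$, model $\cG_s$ as an Erd\H{o}s--R\'enyi graph on $K$ vertices, invoke the Bollob\'as chromatic-number asymptotics, and perform the same base change to arrive at the factor $-N\log_K(1-P_e^2)$. The only point of divergence is the justification of the random-graph model: the paper argues that adjacent edges are asymptotically independent via a mutual-information bound that vanishes as $N$ grows, whereas you propose to sidestep full edge independence with a Lipschitz/martingale concentration argument over the receiver rows --- and your explicit flagging of the tension between the fixed-edge-probability, $K\to\infty$ regime of the Bollob\'as theorem and the $N$-dependence of the edge probability is a subtlety the paper does not address.
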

\begin{IEEEproof}
Our approach is to model the complementary S-IDNC graph $\cG_s$ as a random graph with i.i.d. edge generating probability. Recall that two vertices in $\cG_s$ are connected if the two data packets conflict, i.e., if at least one receiver has missed both packets. Therefore, the edge generating probability, denoted by $P_c$, is calculated as:
\begin{equation}\label{eq:p_conflict}
P_c=1-(1-P_e^2)^N,
\end{equation}

Then, the key is to prove that different edges are generated independently. We first consider the independency between two adjacent edges, say $\e_{1,2}$ and $\e_{1,3}$, which share $\v_1$. The information carried by $\e_1$ (resp. $\e_2$) is that there is at least one receiver who wants both $\p_1$ and $\p_2$ (resp. $\p_1$ and $\p_3$). Hence, the mutual information between $\e_1$ and $\e_2$ is that $\p_1$ is wanted by at least one receiver, which happens with a probability of $1-(1-P_e)^N$. Thus:
\begin{equation}
I(\e_{1,2};\e_{1,3})\leqslant H((1-P_e)^N,1-(1-P_e)^N)\triangleq H(\p_1)
\end{equation}
The inequality holds because other edges incident to $\p_1$ also contribute to $H(\p_1)$. It is clear that $I(\e_{1,2};\e_{1,3})$ quickly converges to zero with increasing $N$, indicating that $\e_{1,2}$ and $\e_{1,3}$ are asymptotically independent of each other. We note that two disjoint edges in $\cG_s$ share no mutual information, and thus are mutually independent. Therefore, we can assume that edges in $\cG_s$ are independently generated.

Consequently, $\cG_s$ can be modeled as an Erd\~os-R\'enyi random graph \cite{Random_graph}, which has $K$ vertices and i.i.d. edge generating probability of $P_c$. \figref{fig:M0} compares the mean number of edges (with a value of $K(K-1)/2\cdot P_c$) of our proposed random graph model and the simulated average number of edges in $\cG_s$. Our model shows virtually no deviation under all considered values of $N$ and $K$.
\begin{figure}
\centering
\includegraphics[width=0.6\linewidth]{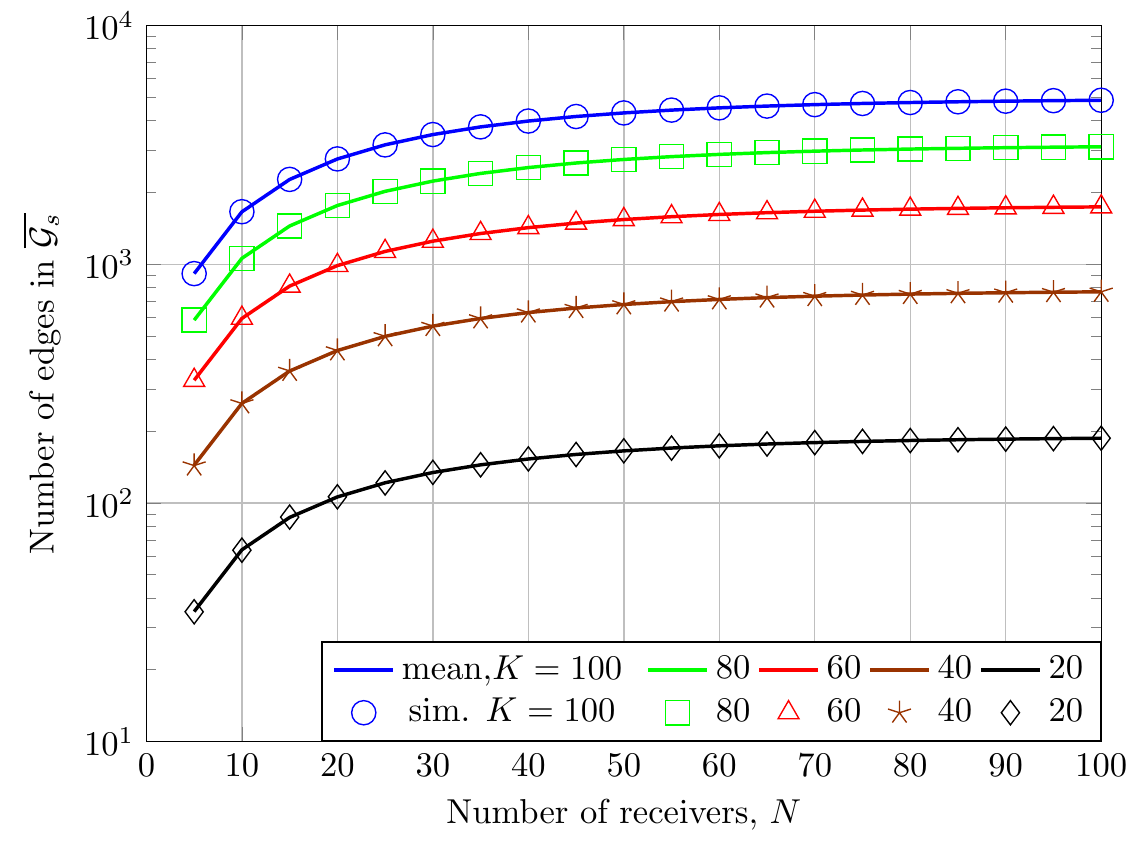}
\caption{The mean and simulated number of edges in $\cG_s$ when $P_e=0.2$ and $K\in[20,100]$.}
\label{fig:M0}
\end{figure}
From graph theory, given $K$ and $P_c$, almost every random graph $\cG_s$ has a chromatic number of:
\begin{equation}\label{eq:random_chromatic}
\chi(\cG_s)=\frac{K}{\log K}\left(\frac{1}{2}+o(1)\right)\log\frac{1}{1-P_c}
\end{equation}
Since $U_s=\chi(\cG_s)$, the above value is the mean of $U_s$. By substituting \eqref{eq:p_conflict} into \eqref{eq:random_chromatic} we obtain \eqref{eq:u_parameters}.
\end{IEEEproof}
Theorem \ref{theo:Us} has the following important corollary:
\begin{Corollary}\label{cor:U_linear_N}
The mean of the absolute minimum block completion time, $E[U_s]$, of S-IDNC increases linearly with the number of receivers in wireless broadcast with i.i.d. random packet erasures.
\end{Corollary}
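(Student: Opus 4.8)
The plan is to obtain Corollary~\ref{cor:U_linear_N} as an immediate consequence of Theorem~\ref{theo:Us}. Theorem~\ref{theo:Us} already expresses the mean of the absolute minimum block completion time in the product form $E[U_s] = c(K,P_e)\cdot N$, where the coefficient $c(K,P_e) = -K\left(\tfrac{1}{2} + o(1)\right)\log_K(1-P_e^2)$ depends only on the block size $K$ and the erasure probability $P_e$, and is therefore a constant once $K$ and $P_e$ are fixed. So the first step is simply to read off that, for fixed $K$ and $P_e$, $E[U_s]$ is an affine --- indeed linear, since it vanishes at $N=0$ --- function of the number of receivers $N$.

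The one point that still needs checking is that this linear dependence is genuinely \emph{increasing}, i.e.\ that the slope $c(K,P_e)$ is strictly positive. For any meaningful erasure probability $P_e\in(0,1)$ and any $K\geqslant 2$ we have $1-P_e^2\in(0,1)$, hence $\log_K(1-P_e^2)<0$; since the factor $\tfrac{1}{2} + o(1)$ is positive for all sufficiently large $K$ and $K>0$, the product $-K\left(\tfrac{1}{2}+o(1)\right)\log_K(1-P_e^2)$ is strictly positive. Thus $E[U_s]$ grows linearly in $N$ with a positive slope, which is exactly the claim.

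I do not anticipate any real obstacle here, since all the heavy lifting --- modelling $\cG_s$ as a random graph, justifying the asymptotic independence of its edges, and invoking the random-graph chromatic-number formula --- has already been carried out in the proof of Theorem~\ref{theo:Us}. If anything, the only remark worth adding is that this linear growth should be contrasted with the $\mathcal{O}(\ln N)$ growth of the block completion time under (throughput-optimal) RLNC: S-IDNC's completion time grows strictly faster in $N$, which is consistent with RLNC being optimal and quantifies precisely how the throughput of S-IDNC vanishes as $N\to\infty$.
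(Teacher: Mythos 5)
Your proposal is correct and matches the paper's (implicit) reasoning exactly: the paper presents Corollary~\ref{cor:U_linear_N} as an immediate read-off from the product form $E[U_s]=c(K,P_e)\cdot N$ in Theorem~\ref{theo:Us}, which is precisely what you do. Your added check that the slope $c(K,P_e)$ is strictly positive for $P_e\in(0,1)$ and $K\geqslant 2$ is a small but worthwhile refinement that the paper omits.
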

Then, by noting that the mean block completion time of the coded transmission phase is lower bounded by $E[U_s]$, we conclude that the throughput of S-IDNC is significantly affected by the number of receivers. S-IDNC may not be a good choice when the number of receivers is large. We note that the above theorem and corollary are not directly applicable to G-IDNC, because the edge generating probability in G-IDNC is quite different. Interested readers are referred to \cite{sorour:valaee:2011} for more information.

\subsection{Absolute minimum average packet decoding delay $D_s$}
Unlike $U_s$, to the best of our knowledge there is no existing hardness result on finding $D_s$. In this subsection, we address it through the following theorem and then propose an upper bound on $D_s$.

\begin{Theorem}\label{theo:D_hard}
It is NP-hard to find the absolute minimum APDD $D_s$ of S-IDNC.
\end{Theorem}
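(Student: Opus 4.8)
The plan is to prove NP-hardness by a polynomial-time reduction from the \emph{minimum sum coloring} (chromatic sum) problem, which asks, given a graph, for a proper vertex coloring by positive integers $1,2,\dots$ minimizing the sum of the assigned colors, and is a classical NP-hard problem. The first and conceptually central step is to recast $D_s$ purely in coloring terms. For an S-IDNC solution, $u_{n,k}$ is the index of the first coding set (clique of $\G_s$) that contains $\p_k$, so all receivers in the Target set $\T_k$ decode $\p_k$ simultaneously, and, writing $\pi(k)$ for that index, $T\cdot D=\sum_{k=1}^{K}T_k\,\pi(k)$. Replacing a solution by the partition $\{\,\{\p_k:\pi(k)=i\}\,\}_{i\ge 1}$ leaves every $u_{n,k}$ unchanged and each part is still a clique (a subset of a clique is a clique), so $D_s$ is attained by an ordered clique partition $(\M_1,\M_2,\dots,\M_m)$ of $\G_s$. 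Coloring the packets of $\M_i$ with color $i$ is a proper coloring of the complementary graph $\cG_s$, since two packets sharing a clique of $\G_s$ are non-conflicting and hence non-adjacent in $\cG_s$; conversely every proper coloring of $\cG_s$ gives an ordered clique partition, and compressing unused colors only lowers the cost. Hence
\begin{equation}
T\cdot D_s=\min\Big\{\,\sum_{k=1}^{K}T_k\,c(\p_k)\ :\ c\ \text{a proper coloring of}\ \cG_s\,\Big\},
\end{equation}
i.e.\ $T\cdot D_s$ is exactly the weighted chromatic sum of $\cG_s$ with vertex weights $T_k$.

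The second step is a gadget realizing an arbitrary graph as $\cG_s$ with all weights equal. Given a graph $G$ on vertex set $\{\p_1,\dots,\p_K\}$ with at least one edge (an instance of minimum sum coloring), construct the SFM $\mA$ by: (i) for every edge $\{\p_i,\p_j\}$ of $G$, adding one receiver whose Wants set is exactly $\{\p_i,\p_j\}$; and (ii) for every vertex $\p_k$, adding $\Delta(G)-\deg_G(\p_k)$ receivers whose Wants set is exactly $\{\p_k\}$, where $\Delta(G)$ is the maximum degree. The receivers in (ii) create no conflicts, so two packets conflict in $\mA$ iff they are adjacent in $G$; therefore $\cG_s=G$. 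Moreover $T_k=\deg_G(\p_k)+(\Delta(G)-\deg_G(\p_k))=\Delta(G)$ for all $k$, so $T=K\,\Delta(G)$ and, by the identity above, $D_s=\tfrac{1}{K}\,\Sigma(G)$ where $\Sigma(G)$ is the chromatic sum of $G$. The construction is clearly polynomial, so any procedure computing $D_s(\mA)$ yields $\Sigma(G)=K\cdot D_s(\mA)$; equivalently the decision problem ``$D_s\le\beta$?'' answers ``$\Sigma(G)\le K\beta$?''. As computing the chromatic sum is NP-hard, so is computing $D_s$. (Dropping step (ii) and reducing instead from the weighted version of the problem works identically.)

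The main obstacle is the first step: proving the exact identity between $D_s$ and the weighted chromatic sum of $\cG_s$, i.e.\ ruling out that some cleverer S-IDNC solution — one that reuses a packet across several coding sets, splits or merges cliques, reorders them, or leaves time slots whose clique is entirely redundant — could beat every ordered clique partition. This rests on the two monotonicity facts (repeating a packet never reduces any decoding time; relabeling/compressing colors never increases the cost) together with the S-IDNC property that a subset of a clique is a clique. Once the equivalence is established, the gadget and the appeal to the NP-hardness of chromatic sum are routine; the only minor point is that $\mA$ must be an admissible input, which holds because arbitrary $0/1$ matrices — including repeated rows and single-entry rows — are valid SFMs, and indeed each arises with positive probability under i.i.d.\ erasures.
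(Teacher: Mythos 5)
Your proof is correct, but it takes a genuinely different route from the paper's. The paper does not reduce from a coloring problem at all: it introduces the notion of a \emph{perfect} S-IDNC solution $\S_p$ (one in which every receiver $R_n$ decodes a new wanted packet from each of the first $|\W_n|$ coding sets), observes that $D_{\S_p}$ equals a lower bound $\underline{D}$ on the APDD of \emph{any} linear network code and that $\underline{D}$ is attained only by $\S_p$, and then invokes the companion result \cite{yu:alex:parastoo:delay:netcod} that deciding achievability of $\underline{D}$ is NP-hard; a polynomial algorithm for $D_s$ would settle that question by comparing $D_s$ against the explicitly computable $\underline{D}$. You instead establish the structural identity $T\cdot D_s=\min_c\sum_{k}T_k\,c(\p_k)$ over proper colorings $c$ of $\cG_s$ (the weighted chromatic sum), and then realize an arbitrary graph $G$ as $\cG_s$ with equalized weights $T_k=\Delta(G)$ via one receiver per edge plus dummy single-packet receivers, so that $D_s=\Sigma(G)/K$. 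The paper's route buys brevity and a slightly stronger message (hardness already of deciding whether the universal lower bound is met, hence relevant to all linear schemes), at the price of resting entirely on an external reference; your route buys self-containment --- a many--one reduction from the classical, well-documented NP-hard minimum sum coloring problem --- plus the identity itself, which has independent value (it recovers Property~\ref{propt:d_bound} by the same averaging argument and lets one import sum-coloring approximation results for $D_s$). Your reduction of an arbitrary S-IDNC solution to an ordered clique partition is sound --- keeping only the first occurrence of each packet preserves every $u_{n,k}$ because a subset of a clique of $\G_s$ is a clique, and compressing empty slots only helps --- and the gadget is valid since any $0/1$ matrix is an admissible SFM; the only things you should add for completeness are an explicit citation for the NP-hardness of chromatic sum and the (trivial) remark that packets with $T_k=0$ contribute nothing and need not be covered.
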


\begin{proof}
In order to prove it, we introduce the concept of \emph{perfect S-IDNC solution}:
\begin{Definition}
An S-IDNC solution is perfect and is denoted by $\S_p$ if every receiver $R_n$ can decode one of its $|\W_n|$ wanted data packets from each of the first $|\W_n|$ coding sets in $\S_p$.
\end{Definition}

This definition implies that $\S_p$, if it exists, offers the perfect packet decoding. Hence, its minimum APDD $D_{\S_p}$ is not only the absolute minimum APDD $D_s$ of S-IDNC, but also a lower bound $\underline{D}$ of the minimum APDD of all other linear network coding techniques. Moreover, it can be easily shown that this lower bound can only be achieved by $\S_p$.

It has been proved in \cite{yu:alex:parastoo:delay:netcod} that it is NP-hard to determine the achievability of $\underline{D}$. Hence, it is NP-hard to determine the existence of $\S_p$. Then by contradiction, if it is not NP-hard to find $D_s$, we can easily determine the existence of $\S_p$ by comparing $D_s$ with $D_{\S_p}$. Therefore, it is NP-hard to find $D_s$.
\end{proof}
Besides the NP-hardness, $D_s$ has the following property:
\begin{property}
The absolute minimum APDD $D_s$ is upper bounded by $U_s$ as
\begin{equation}\label{eq:d_us}
D_s\leqslant\frac{U_s+1}{2}
\end{equation}
\vspace{-2em}
\label{propt:d_bound}
\end{property}
\begin{proof}
Given an S-IDNC solution $\S_s=\{\M_u\}_{u=1}^{U}$, let $T(u)=\sum_{\p_k\in\M_u}T_k$ be the number of receivers who can decode a data packet from $\M_u$. The minimum APDD of $\S_s$ is thus:
\begin{equation}\label{eq:d_order}
D_{\S_s}=\frac{1}{T}\sum_{u=1}^{U}T(u)\cdot u
\end{equation}
which is maximized when $\{T(u)\}_{u=1}^U=\frac{T}{U}$. In this case, $D_{\S_s}=\frac{U+1}{2}$. Applying this result to an S-IDNC solution with absolute minimum block completion time $U = U_s$, we obtain the result.
\end{proof}

Our proof indicates that, although it is NP-hard to achieve $D_s$, we can still effectively reduce APDD by reducing the S-IDNC solution size. Before we further explore this result to implement S-IDNC, we would like to compare the performance limits of S-IDNC that we have just derived with G-IDNC.

\subsection{S-IDNC vs. G-IDNC}\label{sec:sg_comp_erasure_free}
In this subsection, we address the question of \emph{how does S-IDNC compare with G-IDNC?}

We first note that the NP-hardness of finding $D_s$ also holds for $D_g$. This is because the perfect S-IDNC solution $\S_p$ is also the best possible G-IDNC solution. For the throughput, we first present a relation between S- and G-IDNC graphs (proved in the appendix):
\begin{Theorem}\label{theo:equal_solutions}
The minimum clique partition solutions of S-IDNC and G-IDNC graphs have the same size. In other words, $\chi(\cG_s)=\chi(\cG_g)$.
\end{Theorem}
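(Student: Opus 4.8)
My plan is to prove the equivalent statement $U_s=U_g$ by the two inequalities $U_g\leqslant U_s$ and $U_g\geqslant U_s$; throughout, for a clique $\C$ of $\G_g$ write $P(\C)=\{\p_k:\v_{n,k}\in\C\text{ for some }n\}$ for the coding set it transmits.

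\emph{The easy direction $U_g\leqslant U_s$.} First I would show that every S-IDNC coding set is a G-IDNC coding set in a vertex-faithful way: if $\M$ is a clique of $\G_s$ (pairwise non-conflicting packets), then $\C(\M):=\{\v_{n,k}:\p_k\in\M,\ R_n\in\T_k\}$ is a clique of $\G_g$, since for two such vertices $\v_{m,i},\v_{n,j}$ with $i\neq j$ we have $\p_i\notin\W_n$ and $\p_j\notin\W_m$ — $R_n$ wants $\p_j$ and so cannot also want the non-conflicting $\p_i$, and symmetrically. Given a minimum clique partition $\{\M_1,\dots,\M_{U_s}\}$ of $\G_s$, the cliques $\C(\M_1),\dots,\C(\M_{U_s})$ partition the vertex set of $\G_g$ — each $\v_{n,k}$ lies in exactly the one coming from the unique part containing $\p_k$ — so $U_g\leqslant U_s$. (Compactly: $\v_{n,k}\mapsto\p_k$ is a graph homomorphism $\cG_g\to\cG_s$.)

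\emph{The hard direction $U_g\geqslant U_s$.} I would start from a minimum clique partition $\{\C_1,\dots,\C_{U_g}\}$ of $\G_g$, view round $j$ as transmitting $\bigoplus_{\p_k\in P(\C_j)}\p_k$, and convert it into an S-IDNC solution with at most $U_g$ transmissions. The structural fact to exploit is: whenever $\p_k\in P(\C_j)$, every receiver that wants $\p_k$ and decodes \emph{something} in round $j$ in fact decodes $\p_k$ — otherwise that receiver would want two packets of $P(\C_j)$, contradicting that $\C_j$ is a clique. I would use this to \emph{de-split} every packet that is decoded in more than one round, relocating all of its service into a single chosen round $\rho(k)$ and deleting the now-redundant vertices $\v_{n,k}$ from the other rounds (these remain cliques, being subsets of cliques). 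Once each packet is decoded in exactly one round, each $P(\C_j)$ is automatically conflict-free — two conflicting packets in $P(\C_j)$ would force their common wanter to decode twice in round $j$ — so the resulting coding sets form an S-IDNC solution of size at most $U_g$, whence $U_s\leqslant U_g$ and the theorem follows.

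\emph{Where the work is.} The delicate point is the de-splitting step. When the service of $\p_k$ is moved into $\rho(k)$, a receiver $R_n$ that currently decodes $\p_k$ in another round may be idle in round $\rho(k)$ yet still want some other packet $\p_{k'}\in P(\C_{\rho(k)})$, and then inserting $\v_{n,k}$ breaks the clique (equivalently, the conflicting $\p_k,\p_{k'}$ may not share a round). So the relocation must be interleaved with evicting such $\p_{k'}$ and re-homing them, and the crux is to choose the targets $\rho(\cdot)$ together with a processing order — for instance by decreasing $T_k$ — so that this cascade terminates without ever requiring a fresh round, keeping the total at $U_g$. A cleaner alternative I would also pursue is an extremal argument: among all minimum G-IDNC solutions take one minimizing the number of (round, conflicting-pair) incidences and show that number must be $0$; getting that exchange step to go through is the heart of the proof.
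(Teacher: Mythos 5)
There is a genuine gap, and also a mis-stated target. First, the target: you announce that you will prove ``the equivalent statement $U_s=U_g$,'' but that is not equivalent to the theorem and is in fact false in general. The theorem concerns the minimum clique \emph{partitions} of the static graphs, i.e., $\chi(\cG_s)$ and $\chi(\cG_g)$. Because $\G_g$ lacks Property~\ref{prpt:vertex_remove} (removing a clique from $\G_g$ can create new edges and larger cliques), the absolute minimum G-IDNC block completion time satisfies $U_g\leqslant\chi(\cG_g)$ with possibly strict inequality, so $U_g<U_s$ can occur even though $\chi(\cG_g)=\chi(\cG_s)$. Your actual construction does start from a static minimum clique partition of $\G_g$, so the substance aims at the right quantity $\chi(\cG_g)$, but the framing should be corrected. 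Your easy direction ($\v_{n,k}\mapsto\p_k$ lifts a clique partition of $\G_s$ to one of $\G_g$ of the same size, giving $\chi(\cG_g)\leqslant\chi(\cG_s)$) is correct and, if anything, cleaner than the paper's one-line justification.

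The hard direction, however, is only an outline whose central step is missing. Two of its supporting claims are also misjustified: a receiver wanting two packets of $P(\C_j)$ does \emph{not} contradict $\C_j$ being a clique (the two packets can sit in $\C_j$ through other receivers' vertices), and such a receiver decodes \emph{nothing} in round $j$ rather than ``twice.'' More importantly, the de-splitting cascade --- relocating all service of $\p_k$ into one round $\rho(k)$, evicting conflicting packets $\p_{k'}$ from that round, and re-homing them --- is exactly where all the difficulty of the theorem lives, and you explicitly leave its termination within $\chi(\cG_g)$ rounds unproven (``the crux is to choose the targets $\rho(\cdot)$ together with a processing order \dots so that this cascade terminates''). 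Neither the ordering by decreasing $T_k$ nor the extremal ``minimize (round, conflicting-pair) incidences'' variant is carried through, so no proof of $\chi(\cG_s)\leqslant\chi(\cG_g)$ is actually given. For contrast, the paper avoids this global reassignment entirely with a local inductive lemma (Property~\ref{propt:update}): removing any single clique $\M_g$ from $\G_g$ decreases the chromatic number of the affiliated S-IDNC graph by at most one, because the packet set of $\M_g$ is a conflict-free core $\M_s$ plus conflicting pairs whose witnessing receiver's vertices cannot lie in $\M_g$ and hence survive the removal; applying this $\chi(\cG_g)$ times along a minimum clique partition yields $\chi(\cG_g)\geqslant\chi(\cG_s)$. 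If you want to salvage your route, you would need to supply the termination/exchange argument, which is a nontrivial combinatorial claim in its own right.
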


However, the above theorem does not imply $U_s=U_g$. This is because G-IDNC does not have Property \ref{prpt:vertex_remove}. Explicitly, by removing a vertex from $\G_g$, more edges and larger cliques may be generated, and thus the absolute minimum block completion time $U_g$ can be smaller than $\chi(\cG_g)$ of the original G-IDNC graph $\G_g$ \cite{sameh:density:2013}.  We thus have $U_g\leqslant U_s$. We note, however, that a systematic way of finding $U_g$ other than brute-force search remains widely open.

Therefore, when there are no packet erasures, the throughput of G-IDNC is at least as good as S-IDNC. But is this still true in more realistic erasure-prone scenarios? In the next section, we will design S-IDNC transmission schemes under packet erasures and compare them with G-IDNC. We will apply the above theorem to show that G-IDNC cannot outperform S-IDNC under certain circumstances.

\section{S-IDNC Transmission Schemes}\label{sec:schemes}
In this section, we design S-IDNC transmission schemes to compensate for packet erasures in the coded transmission phase, which are i.i.d. with a probability of $P_e$. To this end, the sender has to regularly collect feedback from the receivers about their packet reception state to make online coding decisions. We consider two common types of feedback frequency, namely:
\begin{enumerate}
\item fully-online feedback: feedback is collected after every coded transmission. However, this could be costly in wireless communications. We thus also consider a reduced feedback frequency next;
\item semi-online feedback: feedback is only collected after several (to be quantified later) coded transmissions;
\end{enumerate}

To be able to design S-IDNC transmission schemes, two questions need to be answered first:
\begin{enumerate}
\item What is the optimization objective for throughput and decoding delay improvement?
\item What does the sender need to send to achieve it?
\end{enumerate}

Before addressing these questions, we first highlight some challenges:

\begin{Remark}\label{remark:tractable_matrics}
Under random packet erasures, a reasonable measure of throughput is the mean block completion time $E[U_T]$ of the coded transmission phase. However, it is intractable to minimize $E[U_T]$. To see this, let us consider the stochastic shortest path (SSP) method \cite{sorour:valaee:2010}.
In SSP method, the state space comprises the current SFM and its successors, and thus has a prohibitively large size with a value of $2^T$, where $T$ is the number of ``1''s in $\mA$. The action space for each state comprises all cliques/coding sets, which is NP-hard to find \cite{Bron_algorithm}. Then, $E[U_T]$ is recursively minimized by examining all the states and the associated actions. Such examination is necessary, because the packet erasures can take any pattern and are not predictable. But it makes $E[U_T]$ intractable to minimize. To overcome this difficulty, we will turn to optimization objectives that are heuristic, but still based on SSP optimization principles.
\end{Remark}

\begin{Remark}\label{remark:U_prior}
It is intractable to minimize the APDD $D_T$ of the coded transmission phase due to the NP-hardness of finding $D_s$, because otherwise by setting $P_e=0$, the minimum $D_T$ is equal to $D_s$. To overcome this difficulty, we will give higher priority to the minimization of block completion time. In other words, we first minimize the block completion time. Then among the resultant coding decisions, we choose the one that minimizes the decoding delay. Our prioritization reflects the motivation of using network coding, that is, to achieve better throughput performance. It also provides bounded decoding delay performance as we have shown in \eqref{eq:d_us}. This will also be confirmed by our simulations, which show that $D_T$ generally decreases with decreasing $U_T$.
\end{Remark}

\subsection{Fully-online Transmission Scheme}
With fully-online feedback frequency, the sender only transmits one coded packet before collecting feedback. Under the SSP method, the current state is the current SFM $\mA$, the absorbing state is the all-zero SFM and is denoted by $\mA_0$. The action space comprises all the S-IDNC coding sets of $\mA$. The cost of each action is one, for it consumes one transmission. The block completion time $U_T$ is thus equal to the number of transitions (a.k.a. path length or distance) between $\mA$ and $\mA_0$.

According to Remark~1, it is intractable to choose an action/coded packet that minimizes the expected path length (and thus $E[U_T]$). As a heuristic alternative, we propose to choose an action/coded packet that belongs to the shortest path from $\mA$ to $\mA_0$, which has a length of $U_s$. This choice guarantees that, upon the reception of the coded packet at all interested receivers, the shortest distance between the updated state $\mA'$ and $\mA_0$ is minimized to $U_s-1$. To this end, the coded packet must belong to a minimum clique partition solution $\S_c$. Otherwise, the shortest distance between $\mA'$ and $\mA_0$ is still $U_s$.

We then reduce APDD by forcing the coded packet to be maximal (and thus serving the maximal number of receivers). However, cliques in a minimum clique partition solution are not necessarily maximal. Hence, we further require the coded packet to belong to a set of $U_s$ maximal cliques that together cover all the data packets. This set is also an S-IDNC solution and is denoted by $\S_m$.

In conclusion, we propose the following coded packet $\M_f$ for fully-online transmission  scheme:

\emph{Given an SFM instance, the preferred coded packet $\M_f$ is the most wanted coded packet in $\S_m$, where $\S_m$ is an S-IDNC solution that contains $U_s$ maximal cliques.}

\subsection{Semi-online Transmission Scheme}
The fully-online transmission scheme is costly, not only in collecting feedback, but also in computational load, as it has to find $\S_m$ in every time slot. These problems can be mitigated by partitioning the coded transmission phase into rounds. In each round, the sender transmits a complete S-IDNC solution and only collects feedback  at the end of each round. We call this scheme the semi-online scheme.

Under the SSP method, the action space is the set of all S-IDNC solutions $\mathbb S_s$, and the cost of each action is the solution size $|\S_s|$, which is equal to the length of a semi-online transmission round. The total cost is thus equal to the block completion time.

According to Remark 1, it is intractable to minimize the expected total cost (and thus $E[U_T]$). As a heuristic alternative, we propose to minimize the expected cost of the shortest path between $\mA$ and $\mA_0$. The shortest path has a length of one, representing the event that every coded packet of the chosen solution $\S_s$ is received by all the interested receivers after only one semi-online round. Denote the probability of this event by $P_s$. Then the expected cost is $|\S_s|/P_s$, where $P_s$ is calculated as:
\begin{equation}\label{eq:p_s}
P_s=\prod_{k=1}^K(1-P_e^{d_k})^{T_k}
\end{equation}
Here $d_k$ is called the packet diversity and is defined below.
\begin{Definition}
The diversity $d_k$ of data packet $\p_k$ is the number of coding sets in $\S_s$ that comprise $\p_k$.
\end{Definition}

We note that the minimum clique partition solution $\S_c$ is not a preferred semi-online S-IDNC solution. Although $\S_c$ offers the smallest solution size ($|\S_c|=U_s$), it does not maximize $P_s$ because every data packet has a diversity of only one due to disjoint cliques in $\S_c$. Instead, the $\S_m$ we have proposed for the fully-online case can offer a higher $P_s$ than $\S_c$ due to possibly overlapping maximal cliques, while also offering the smallest solution size.

We still wish to answer the following question before choosing $\S_m$ as our preferred semi-online S-IDNC solution: \emph{Is there a solution that, though large in its size, provides higher packet diversities, so that $P_s$ is maximized?}

An explicit answer to this question is difficult to obtain, because it requires the examination of all the solutions of size greater than $U_s$. Such search is costly and does not provide any insight into this question. Moreover, a solution with a larger block completion time is unlikely to provide higher packet diversities due to the following property of S-IDNC solutions:
\begin{property}
Every coding set in an S-IDNC solution comprises at least one data packet with a diversity of one.
\end{property}
This property holds because if every data packet in a coding set has a diversity of greater than one, then this coding set can be removed from the solution without affecting the completeness of the solution. Due to the above property, an S-IDNC solution $\S_s$ has at least $|\S_s|$ data packets with a diversity of only one. According to \eqref{eq:p_s}, these unit-diversity data packets reduce $P_s$ the most.

Therefore, we choose $\S_m$ for throughput improvement. Then, by taking into account our secondary optimization objective, i.e., the APDD, we define our preferred semi-online S-IDNC solution as follows:

\emph{Given an SFM instance, the proposed semi-online S-IDNC solution is $\S_m$, which comprises a set of $U_s$ maximal cliques. The cliques are sorted for transmission in the descending order of their numbers of targeted receivers to minimize the APDD.}

\begin{figure*}
\centering
\includegraphics[width=0.6\linewidth]{./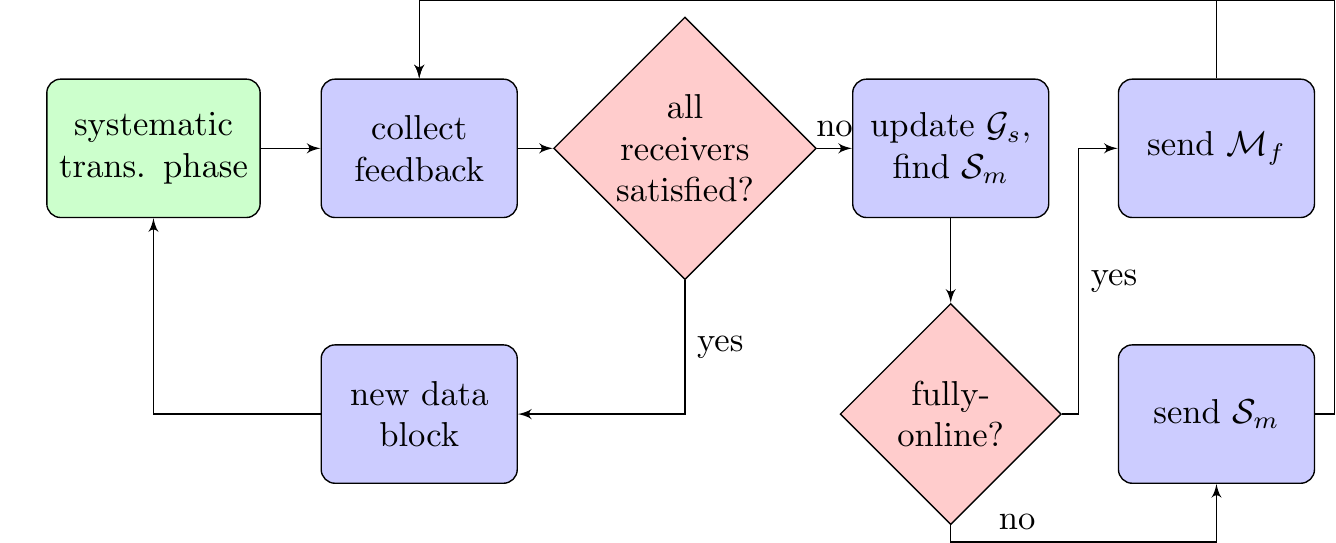}
\caption{The fully- and semi-online transmission schemes.}
\label{fig:flow}
\end{figure*}
A flow-chart of the proposed two transmission schemes are presented in \figref{fig:flow}. Both the fully- and semi-online IDNC schemes require finding $\S_m$. Since packet diversity is not a concern in graph theory, there is no algorithms to find $\S_m$ in the graph theory literature. Hence, we will design algorithms dedicated for S-IDNC in the next section. Before doing so, however, we briefly compare S-IDNC and G-IDNC under the above two transmission schemes.

\subsection{S-IDNC vs. G-IDNC}

With fully-online feedback, the sender can update the G-IDNC graph and add new edges representing coding opportunities after every transmission. The throughput of G-IDNC is thus better than S-IDNC. But the price is high computational load, because G-IDNC graph is much larger than S-IDNC graph ($\mathcal O(NK)$ v.s. $K$). However, during a semi-online transmission round, the sender cannot update SFM due to the absence of receiver feedback. Consequently, it does not update the G-IDNC graph $\G_g$ \cite{sorour:limited_feedback:2011}, and only sends the minimum clique partition solution of $\G_g$, which, according to Theorem \ref{theo:equal_solutions}, has the same size as the minimum clique partition solution of S-IDNC. We thus have the following corollary:
\begin{Corollary}
G-IDNC cannot reduce the length of a semi-online transmission round compared to S-IDNC.
\end{Corollary}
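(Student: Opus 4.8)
The plan is to reduce the corollary to Theorem~\ref{theo:equal_solutions}. The crucial observation, absent in the fully-online case, is that a semi-online round carries no intermediate feedback, so throughout the round the sender's G-IDNC graph $\G_g$ is frozen: no vertex is removed and no new edges or larger cliques appear. Hence any collection of G-IDNC coding sets that forms a complete solution within a single round must have its cliques cover every vertex $\v_{n,k}$ of $\G_g$ (each such vertex is a receiver/wanted-packet pair that has to be decoded by the end of the round), so the shortest such round is a minimum clique partition of $\G_g$, of length $\chi(\cG_g)$ --- which is precisely the round length used by existing semi-online G-IDNC schemes \cite{sorour:limited_feedback:2011}.

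First I would record that a semi-online S-IDNC round has minimum length $U_s$: by construction it transmits a complete S-IDNC solution $\S_s$ and lasts $|\S_s|$ slots, and $\min_{\S_s\in\mathbb S_s}|\S_s|=U_s$, attained e.g.\ by the minimum clique partition $\S_c$ (or by the $\S_m$ of Section~\ref{sec:schemes}, which also has $|\S_m|=U_s$). Next I would invoke Theorem~\ref{theo:equal_solutions} to get $\chi(\cG_g)=\chi(\cG_s)=U_s$, so the minimum semi-online G-IDNC round length equals $U_s$, which is exactly the minimum semi-online S-IDNC round length. Comparing the two, every semi-online G-IDNC round is at least as long as the best semi-online S-IDNC round; G-IDNC therefore cannot shorten it, which is the claim.

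I expect the delicate point to be justifying the ``freezing'' step rigorously. One must argue that the very feature that makes G-IDNC stronger in general --- allowing non-instantly-decodable packets so as to create extra coding opportunities --- is worthless inside a single round, since those opportunities are unlocked only when feedback reveals newly received packets (new edges in $\G_g$), exactly what a semi-online round withholds. In particular, the argument should confirm that, under the IDNC operating model adopted here, a receiver gains nothing toward completion by accumulating several non-instantly-decodable packets during the round, so no round of length below $\chi(\cG_g)$ can complete the broadcast. Contrasting this with the fully-online case, where each received packet updates $\G_g$ and G-IDNC \emph{does} shorten the round, makes clear that the result is really a statement about the (un)availability of feedback rather than about coding power.
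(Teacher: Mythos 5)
Your proposal is correct and follows essentially the same route as the paper: freeze the G-IDNC graph for the duration of the round (no feedback, so no new edges or cliques), observe that the shortest complete round is a minimum clique partition of $\G_g$ of size $\chi(\cG_g)$, and invoke Theorem~\ref{theo:equal_solutions} to equate this with $\chi(\cG_s)=U_s$. The only difference is that you derive the $\chi(\cG_g)$ round-length from a vertex-covering argument, whereas the paper simply takes it as the operating assumption of semi-online G-IDNC from \cite{sorour:limited_feedback:2011}; your version is slightly more self-contained but not a different proof.
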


\section{S-IDNC Coding Algorithms}\label{sec:algorithms}

The two transmission schemes we proposed in the last section require finding $\S_m$, an S-IDNC solution that contains $U_s$ maximal coding sets. In this section, we develop its optimal and heuristic algorithms.

\subsection{Optimal S-IDNC coding Algorithm}\label{sec:optimal_algorithms}

Our optimal S-IDNC coding algorithm finds $\S_m$  in two steps:
\begin{enumerate}[{Step}-1]
\item \emph{Find all the maximal coding sets (maximal cliques):}
This problem is NP-hard in graph theory. We apply an exponential algorithm, called Bron-Kerbosch (B-K) algorithm \cite{Bron_algorithm}. The group of all maximal cliques is denoted by $\mathcal{A}$.
\item \emph{Find $\S_m$ from $\mathcal{A}$:}
We propose a branching algorithm in Algorithm \ref{algorithm:branch}. The intuition behind this algorithm is that, if a data packet $\p_k$ belongs to $d_k$ maximal coding sets in $\mathcal{A}$, then one of these $d_k$ maximal coding sets must be included in $\S_m$ for the completeness of $\S_m$. In the extreme case where $d_k=1$, the sole maximal coding set that contains $\p_k$ must be included in $\S_m$. Below is an example of Algorithm \ref{algorithm:branch}.
\end{enumerate}

\begin{figure*}
\centering
\includegraphics[width=0.6\linewidth]{./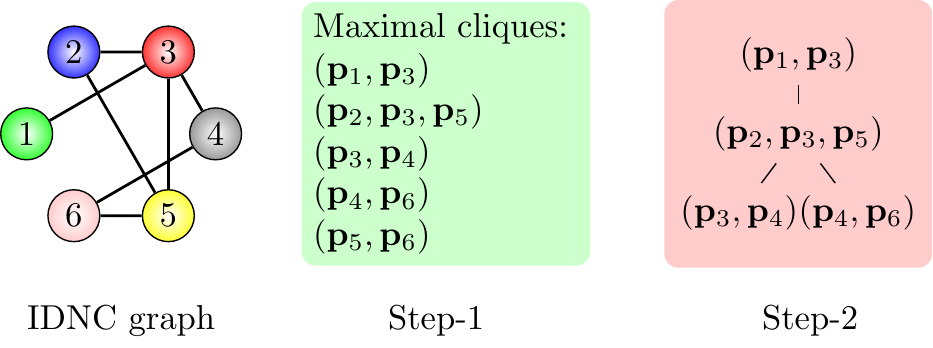}
\caption{An example of the 2-step optimal S-IDNC coding algorithm.}
\label{fig:min_collection}
\end{figure*}
\begin{Example}

Consider the graph model in \figref{fig:min_collection}. In Step-1, we find all the maximal cliques: $\mathcal{A}=\{(\p_1,\p_3)$, $(\p_2,\p_3,\p_5)$, $(\p_3,\p_4)$, $(\p_4,\p_6)$, $(\p_5,\p_6)\}$. Then in Step-2:
\begin{enumerate}
\item Initially, $\S=\emptyset$, $\overline{\S}=\mathcal{A}\setminus\S=\mathcal{A}$, and the set of data packets not included in $\S$ is $\overline{\P}=\{\p_1,\p_2,\p_3,\p_4,\p_5,\p_6\}$. Since $\p_1$ is only included in $(\p_1,\p_3)$ and $\p_2$ is only included in $(\p_2,\p_3,\p_5)$, these two coding sets must be added to $\S$. Hence, $\S=\{(\p_1,\p_3),(\p_2,\p_3,\p_5)\}$ after the first two iterations;
\item The set of data packets not included in $\S$ is $\overline{\P}=\{\p_4,\p_6\}$, and the remaining maximal coding sets are $\overline{\S}=\mathcal{A}\setminus\S$=$\{(\p_3,\p_4)$, $(\p_4,\p_6)$, $(\p_5,\p_6)\}\}$. Since $\p_4$ has a diversity of 2 under $\overline{\S}$ due to $(\p_3,\p_4)$ and $(\p_4,\p_6)$, we branch $\S$ into two successors: $\S_1=\{(\p_1,\p_3)$,$(\p_2,\p_3,\p_5)$,$(\p_4,\p_5)\}$ and $\S_2=\{(\p_1,\p_3)$,$(\p_2,\p_3,\p_5)$,$(\p_4,\p_6)\}$. Since $\S_2$ contains all data packets and there are no other branching opportunities, the algorithm stops and outputs $\S_2$ as $\S_m$.
\end{enumerate}
\end{Example}


\begin{algorithm}[t]
\caption{Optimal S-IDNC solution search}\label{algorithm:branch}
\begin{algorithmic}[1]
\STATE \textbf{input:} the group of all maximal coding sets, $\mathcal{A}$;
\STATE \textbf{initialization:} a set $\mathcal{B}$ of solutions, $\mathcal{B}$ only contains an empty solution $\S=\emptyset$. A counter $u=1$;
\WHILE {no solution in $\mathcal{B}$ contains all data packets,}
\WHILE {there is a solution in $\mathcal{B}$ with size $u-1$,}
\STATE Denote this solution by $\S=\{\M_1,\cdots,\M_{u-1}\}$. Denote the data packets included in $\S$ by $\P=\bigcup_{i=1}^{u-1}\{\M_i\}$ and all data packets not included in $\S$ by $\overline{\P}=\P_K\setminus\P$. Also denote the maximal coding sets not included in $\S$ by $\overline{\S}=\mathcal{A}\setminus\S$;
\STATE Pick from $\overline{\P}$ the data packet $\p$ that has the smallest diversity $d$ under $\overline{\S}$. Denote the $d$ coding sets which contain $\p$ by $\M'_1,\cdots,\M'_d$;
\STATE Branch $\S$ into $d$ new solutions, $\S'_1,\cdots,\S'_d$. Then, add $\M'_1,\cdots,\M'_d$ to these solutions, respectively. The sizes of the new solutions are $u$;
\ENDWHILE
\STATE $u=u+1$;
\ENDWHILE
\STATE Output the solutions in $\mathcal{B}$ that contain all data packets.
\end{algorithmic}
\end{algorithm}

B-K algorithm and Algorithm 1 constitute our optimal S-IDNC coding algorithm. It produces all the valid $\S_m$. Among these solutions, we can choose the one that optimizes a secondary criteria, such as the one offering the smallest $D_\S$, or the largest $P_S$, calculated using \eqref{eq:p_s}.

\subsection{Hybrid S-IDNC Coding Algorithm}\label{sec:hybrid_algorithms}
Algorithm \ref{algorithm:branch} is memory demanding, because the number of candidate solutions grows exponentially with branching. Thus, we propose a heuristic alternative to it. The idea is to iteratively maximize the number of data packets included in $\S_m$. The algorithm is given in Algorithm 2.

\begin{algorithm}[t]
\caption{Hybrid S-IDNC solution search}\label{algorithm:hybrid}
\begin{algorithmic}[1]
\STATE \textbf{input:} the group of all maximal coding sets, $\mathcal{A}$;
\STATE \textbf{initialization:}  an empty solution $\S=\emptyset$, a counter $u=1$, packet set $\P=\P_K$;
\WHILE {$\S$ does not contain all data packets,}
\STATE find the coding set $\M$ in $\mathcal{A}$ that contains the largest number of data packets in $\P$;
\STATE Add $\M$ to $\S$ and remove data packets in $\M$ from $\P$;
\STATE $u=u+1$;
\ENDWHILE
\STATE Output the solution $\S$.
\end{algorithmic}
\end{algorithm}

B-K algorithm and Algorithm 2 constitute our hybrid S-IDNC coding algorithm. It produces only one S-IDNC solution, with no guarantee on the solution size. It is still computational expensive due to B-K algorithm. Thus, we develop a polynomial time heuristic S-IDNC coding algorithm next.

\subsection{Heuristic S-IDNC coding Algorithm}
Algorithm \ref{algorithm:clique} is a simple algorithm that heuristically finds the maximum (the largest maximal) clique of a graph. The intuition behind this algorithm is that, a vertex is very likely to be in the maximum clique if it is incident by the largest number of edges. Variations of this algorithm have been developed in the literature \cite{Rozner_Heuristic_clique,sorour:valaee:2010,sameh:valaee:globecom:2010}. But this algorithm has not been applied to finding a complete S-IDNC solution, and its computational complexity has not been identified yet.

\begin{algorithm}
\caption{Heuristic maximum clique search}\label{algorithm:clique}
\begin{algorithmic}[1]
\STATE \textbf{input}: graph $\G(\V,\E)$;
\STATE \textbf{initialization}: an empty vertex set $\V_{\textrm{keep}}$;
\WHILE {$\G$ is not empty}
\STATE add to  $\V_{\textrm{keep}}$ the vertex $\v$ which has the largest number of edges incident to it;
\STATE update $\G$ by deleting $\v$ and all the vertices not connected to $\v$ (\textit{These vertices can be ignored because they cannot be part of the target clique, which contains $\v$});
\ENDWHILE
\STATE vertices in $\V_{\textrm{keep}}$ are pair-wise connected, and no other vertices can be added to them. Hence, $\V_{\textrm{keep}}$ is a maximal clique.
\end{algorithmic}
\end{algorithm}

The computational complexity of Algorithm 3 is polynomial in the number of data packets $K$. The highest computational cost occurs when the input graph is complete, i.e., when all vertices are connected to each other.
In this case, only one vertex will be removed in each iteration. Thus, the number of remaining vertices in iteration-$i$ will be $K-i$, $\forall i\in[0,K-1]$. Then, to find the vertex with the largest number of incident edges, we need $K-i$ comparisons. The total computational cost is thus in the order of $\sum_{i=0}^{K-1}K-i=K(K-1)/2$. Hence, the computational complexity of Algorithm 3 is at most $\mathcal O(K^2)$.


We apply Algorithm 3 to iteratively find $\S_m$ in Algorithm \ref{algorithm:encoding}. In each iteration, we find a clique using Algorithm 3, maximize it by adding more vertices to it whenever possible, and then remove it from the S-IDNC graph. This will increase the diversities of the added vertices/packets. Below is an example:
\begin{Example}
Consider the graph $\G_s$ in \figref{fig:sg_example}. In the first two iterations, the algorithm will choose $\M_1=(\p_1,\p_2,\p_4)$ and $\M_2=(\p_3,\p_6)$, respectively. In the third iteration, $\V_\textrm{covered}=\{\p_1,\p_2,\p_3,\p_4,\p_6\}$ and the algorithm can only choose $\M_3=(\p_5)$. Among all the data packets in $\V_\textrm{covered}$, $\p_2$ can be added to $\M_3$. Thus $\M_3=\{\p_2,\p_5\}$. The algorithm then stops and outputs $\S_m=\{(\p_1,\p_2,\p_4),(\p_3,\p_6),(\p_2,\p_5)\}$.
\end{Example}
\begin{algorithm}
\caption{Heuristic S-IDNC solution search}\label{algorithm:encoding}
\begin{algorithmic}[1]
\STATE \textbf{input}: a graph $\G(\V,\E)$;
\STATE \textbf{initialization}: an empty vertex set $\V_{\textrm{covered}}$, a working graph $\G_w=\G$, and a counter $i=0$;
\WHILE {$\V_{\textrm{covered}}\neq\V$}
\STATE Find the maximum clique in $\G_w$ using Algorithm \ref{algorithm:clique}. Denote it by $\M_i$ ;
\STATE Find the vertices in $\V_{\textrm{covered}}$ which are connected to $\M_i$. Denote their set by $\V_i$ (\textit{They are the candidate vertices that could be added to $\M_i$}.);
\STATE Generate a subgraph of $\G$ whose vertex set is $\V_i$. Denoted this subgraph by $\G'_i(\V_i,\E_i)$;
\STATE Find the maximum clique in $\G'_i$ using Algorithm 1, denoted it by $\M_i'$ (\textit{All vertices in $\M_i'$ are connected to each other and thus can all be added to $\M_i$.});
\STATE Update $\V_{\textrm{covered}}$ by adding vertices in $\M_i$ into it;
\STATE Update $\G_w$ by removing $\M_i$ from it;
\STATE Update $\M_i$ as $\M_i=\M_i\cup\M_i'$ (\textit{The new clique is at least as large as the old one, and thus provides higher packet diversity});
\STATE $i=i+1$;
\ENDWHILE
\end{algorithmic}
\end{algorithm}

In conclusion, we proposed an optimal algorithm that finds $\S_m$, as well as its hybrid and heuristic alternatives. The output $\S_m$ is used as the S-IDNC solution for the semi-online transmission scheme. If fully-online transmission scheme is applied, the transmitted coding set $\M_f$ is chosen from $\S_m$.

\section{Simulations}
In this section, we present the simulated throughput and decoding delay performance of S-IDNC (abbreviated as S- in the figures) under different scenarios, including under full- and semi-online transmission schemes, and under the use of optimal, hybrid, and heuristic coding algorithms (abbreviated in the figures as Opt., Hybr., and Heur., respectively). The packet block size is $K=15$. The number of receivers $N$ is chosen between 5 and 40. The packet erasures are i.i.d. among the channels between the sender and the receivers probability with a probability of $P_e=0.2$.


We also compare S-IDNC with RLNC and G-IDNC. For RLNC, we assume a sufficiently large finite field, so that its throughput is almost surely optimal and serves as a benchmark. For G-IDNC, although its best performance is at least as good as S-IDNC (as we have explained in Section \ref{sec:sg_comp_erasure_free}), this advantage will not necessarily be reflected in our simulation results. This is because there has not been any optimal G-IDNC algorithm. Instead, we apply a heuristic algorithm (abbreviated as Heur. G- in the figures) proposed in \cite{sorour:valaee:2010}, which aims at minimizing the block completion time. This aim coincides with our optimization priorities for S-IDNC in Remark \ref{remark:U_prior}, namely, to minimize the block completion time first.

We conduct three sets of simulations. The first set compares the performance limits of the three techniques. The results are presented in \figref{fig:UD_min}. Here for RLNC, its absolute minimum block completion time is equal to the size of the largest Wants set of the receivers. This number cannot be further reduced by any means, because otherwise the most demanding receivers cannot decode all its wanted data packets. The second (resp. third) set of simulations compares the throughput and decoding delay performance under fully-online (resp. semi-online) transmission scheme. The results are presented in \figref{fig:UD_fully} (resp. \figref{fig:UD_semi}). We note that the performance of RLNC is the same under both schemes, because RLNC is feedback-free.

Our observations on S-IDNC are as follows:
\begin{itemize}
\item The absolute minimum block completion time of S-IDNC increases almost linearly with $N$. This result matches Corollary \ref{cor:U_linear_N};
\item The fully-online transmission scheme always provides better throughput and decoding delay performance than the semi-online one;
\item The optimal coding algorithm always provides better throughput performance than its hybrid and heuristic alternatives. This result verifies our choice of $\S_m$ for throughput improvement, because only the optimal coding  algorithm can always produce $\S_m$, which has $|\S_m|=U_s$;
\item However, the optimal coding  algorithm does not necessarily minimize the APDD. For example, in \figref{fig:UD_fully}(b), the hybrid algorithm provides smaller APDD than the optimal one under the fully-online transmission scheme;
\item The performance gap between the optimal and hybrid algorithms is always marginal, and is much smaller than their gap with the heuristic one. Hence, the hybrid algorithm strikes a good balance between performance and computational load.
\end{itemize}

A cross comparison of RLNC, S-, and G-IDNC shows that:
\begin{itemize}
\item The throughput of RLNC is always the best. The throughput of S-IDNC is very close to RLNC when the number of receivers is small. Their gap increases with $N$;
\item In general, the APDD of both S- and G-IDNC is better than RLNC. This advantage only vanishes when the block completion time of S- and G-IDNC becomes much larger than RLNC, which takes place when $N$ is much larger than $K$;
\item There is no clear winner between the performance of heuristic G-IDNC and optimal S-IDNC. We can expect that G-IDNC will outperform S-IDNC if its optimal coding algorithm is developed.
\end{itemize}

In summary, our simulations verified our theorems, propositions, and algorithms. They also demonstrated that, if we are concerned with both throughput and decoding delay performance, S-IDNC is a good alternative to RLNC when the number of receivers is not too large.

\begin{figure*}
\centering
\subfigure[Block completion time]{\includegraphics[width=0.45\linewidth]{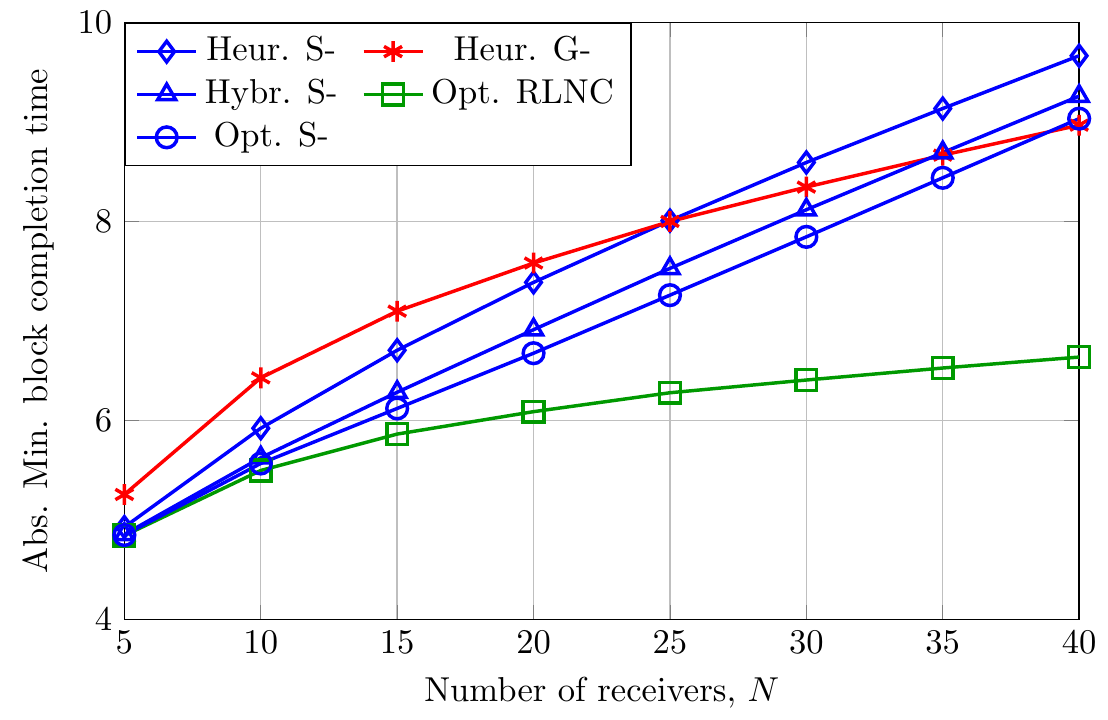}}
\subfigure[Average packet decoding delay]{\includegraphics[width=0.45\linewidth]{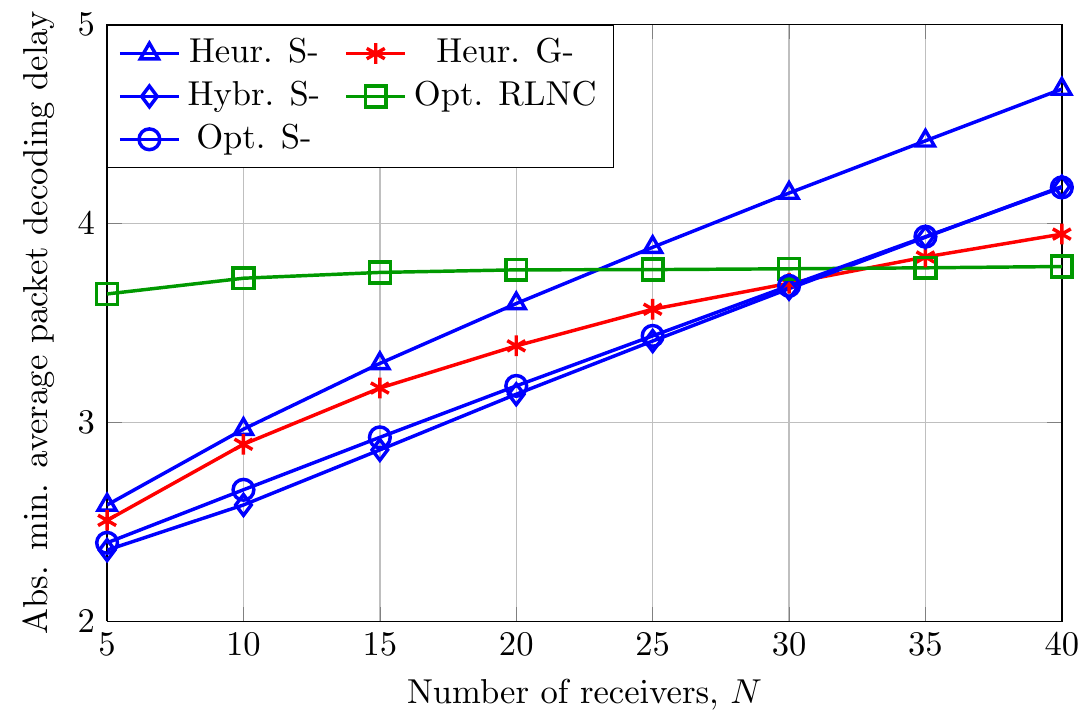}}
\caption{The throughput and decoding delay performance limits of S- and G-IDNC, as well as RLNC.}
\label{fig:UD_min}
\end{figure*}

\begin{figure*}
\centering
\subfigure[Block completion time]{\includegraphics[width=0.45\linewidth]{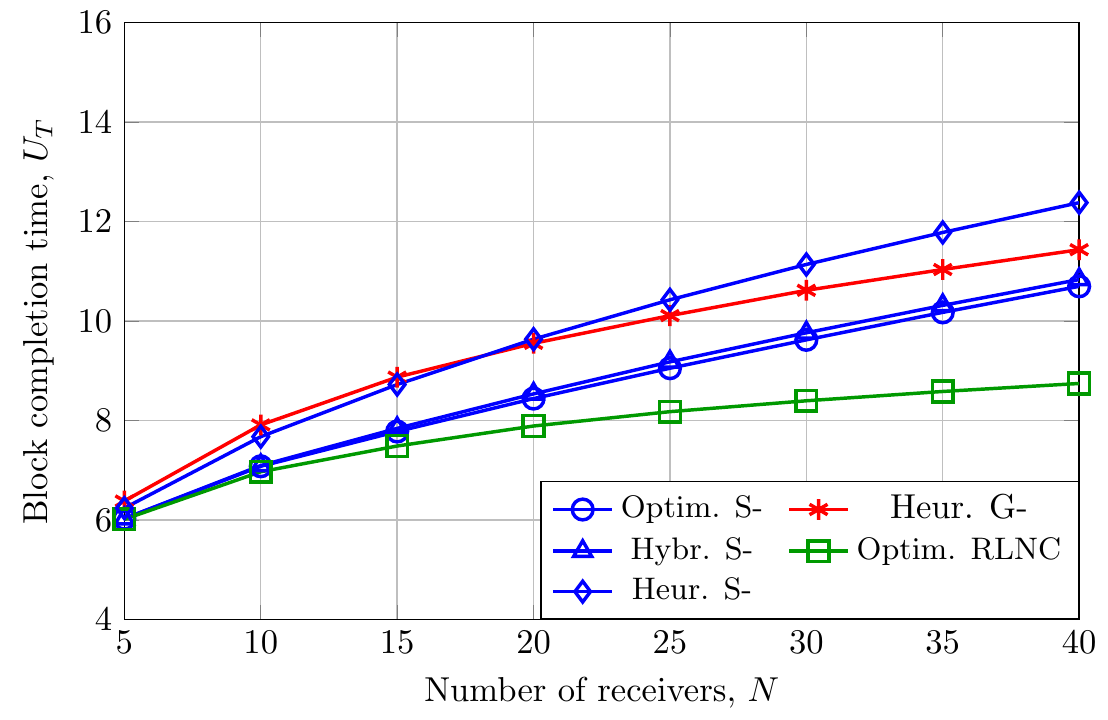}}
\subfigure[Average packet decoding delay]{\includegraphics[width=0.45\linewidth]{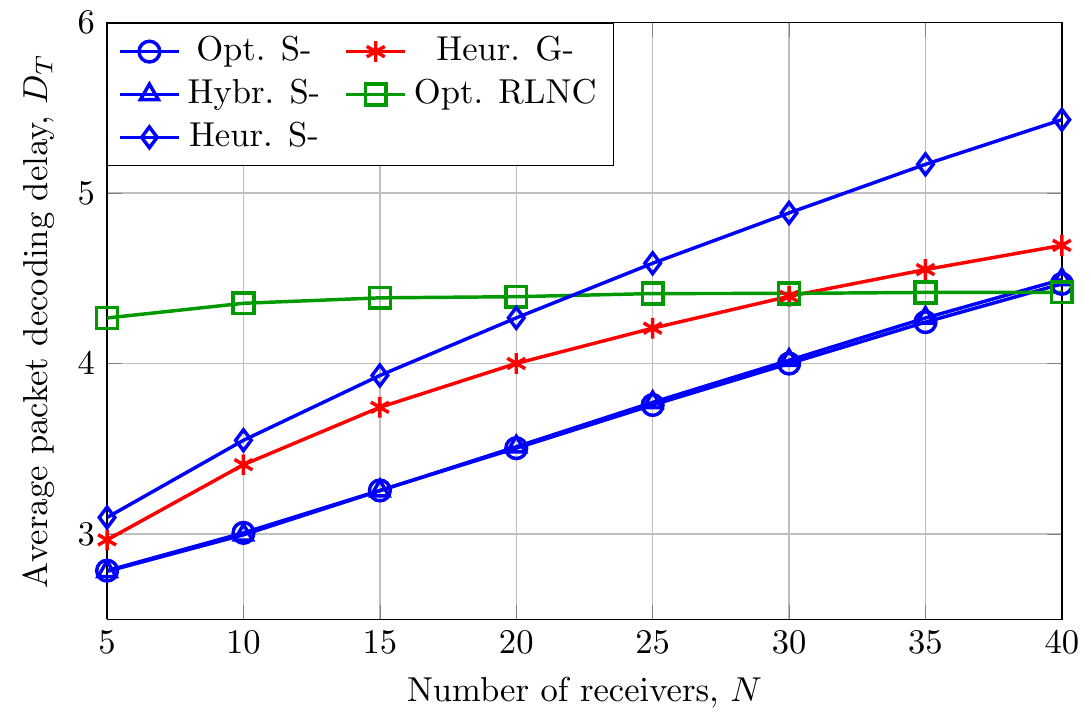}}
\caption{The throughput and decoding delay performance of fully-online transmission scheme when different coding algorithms are applied. It is compared with the performance of heuristic fully-online G-IDNC and RLNC.}
\label{fig:UD_fully}
\end{figure*}

\begin{figure*}
\centering
\subfigure[Block completion time]{\includegraphics[width=0.45\linewidth]{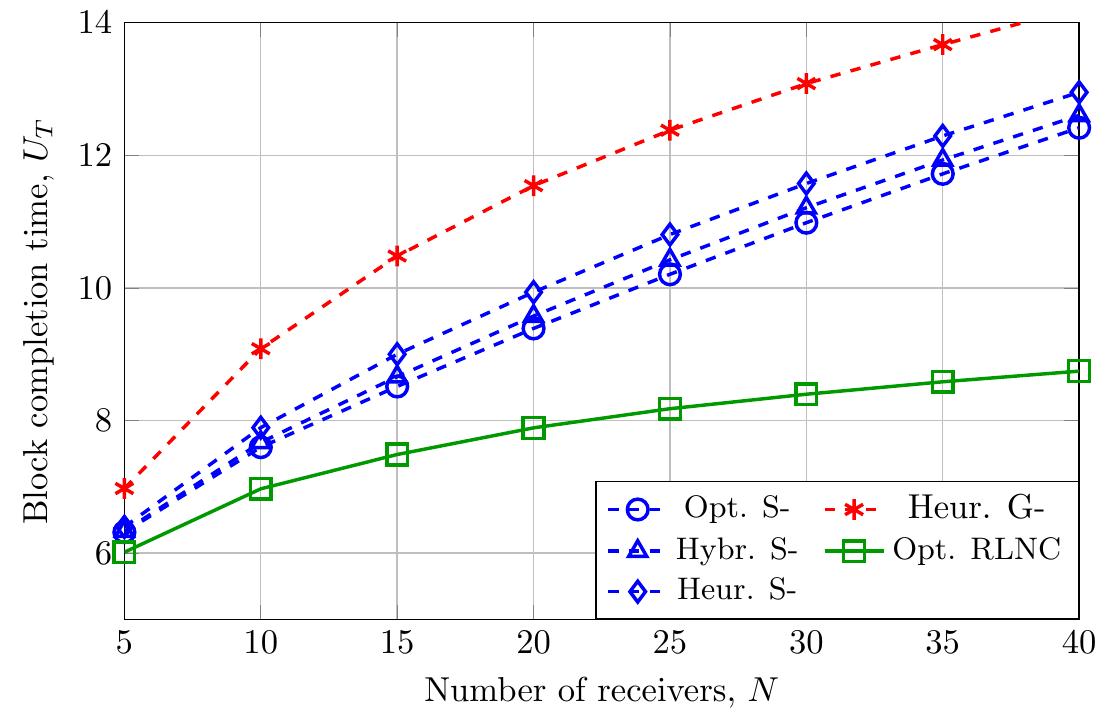}}
\subfigure[Average packet decoding delay]{\includegraphics[width=0.45\linewidth]{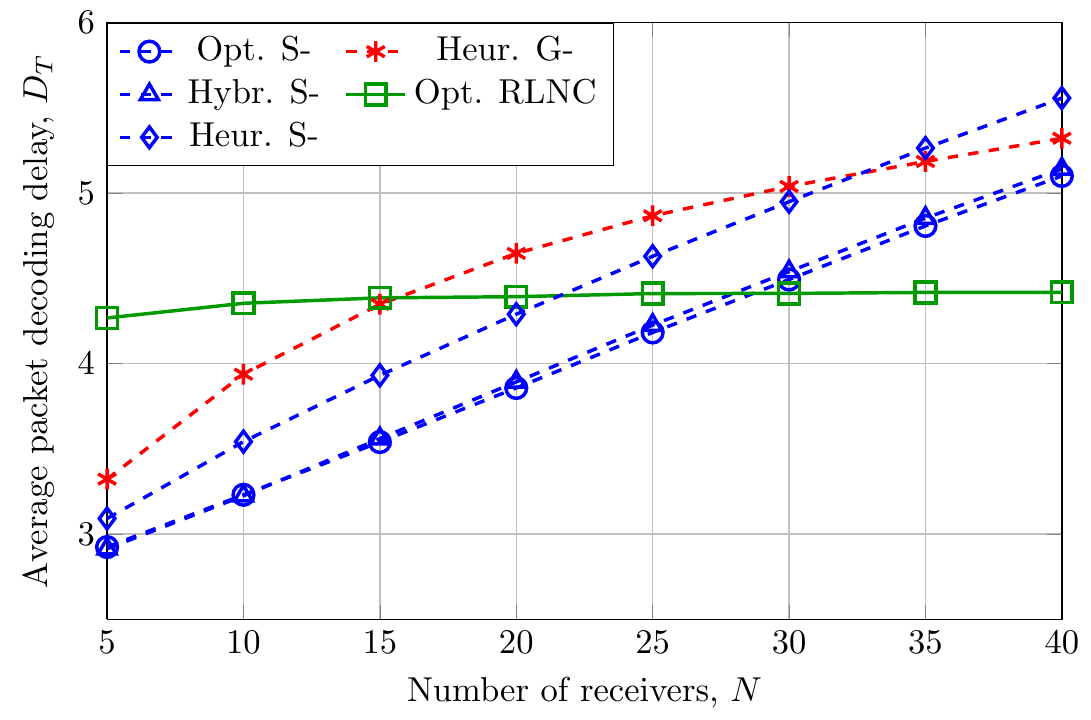}}
\caption{The throughput and decoding delay performance of semi-online transmission scheme when different coding algorithms are applied. It is compared with the performance of heuristic semi-online G-IDNC and RLNC.}
\label{fig:UD_semi}
\end{figure*}

\section{Conclusion}
In this paper, we studied the throughput and decoding delay performance of S-IDNC in broadcasting a block of data packets to wireless receivers under packet erasures. By using a random graph model, we showed that the throughput of S-IDNC decreases linearly with increasing number of receivers. By introducing the concept of perfect S-IDNC solution, we proved the NP-hardness of minimizing the average packet decoding delay. We also proposed two upper bounds on the throughput and decoding delay limits of S-IDNC.  We considered two transmission schemes that requires fully- and semi- online feedback frequencies, respectively. By applying stochastic shortest path method, we showed that it is intractable to make optimal coding decisions in the presence of random packet erasures. We then used heuristic objective functions to determine the preferred coded packet(s) to send. We then developed the optimal S-IDNC coding algorithm and its complexity-reduced heuristics. We also compared S-IDNC with G-IDNC by proving the equivalence between the chromatic number of the complementary S-IDNC and G-IDNC graphs. We used this equivalence to show that G-IDNC can outperform S-IDNC when there are not packet erasures, but this is not always true when there are packet erasures.

Our work provides news understandings of S-IDNC. It will facilitate the extension of S-IDNC to applications in other network settings, such as cooperative data exchange and distributed data storage. We are also interested in designing approximation and heuristic algorithms for decoding delay minimization.

\appendices
\renewcommand\thesection{Appendix \textnormal{section}}
\section{Proof of Theorem \ref{theo:equal_solutions}}

Theorem \ref{theo:equal_solutions} requires the proof of $\chi(\cG_s)=\chi(\cG_g)$. Since every S-IDNC solution is also a G-IDNC solution, but a G-IDNC solution is not necessarily an S-IDNC solution, we have $U_s\geqslant U_g$, and thus $\chi(\cG_s)\geqslant\chi(\cG_g)$. Hence, here we only need to prove that $\chi(\cG_s)\leqslant\chi(\cG_g)$.

We first introduce the concept of \emph{affiliated} S-IDNC graph $\G_{as}$ of a G-IDNC graph $\G_g$, which is construct as follows. Given $\G_g$ that involves $K$ data packets and $N$ receivers, we generate a graph $\G_{as}$ with $K$ vertices, each representing a data packet. We then connect $\v_i$ and $\v_j$ in $\G_{as}$ if for every pair of $\{m,n\}\in[1,N]$, $\v_{i,m}$ and $\v_{j,n}$ are connected upon their existence in $\G_g$. In other words, we claim that $\p_i$ and $\p_j$ do not conflict if every vertex that represents $\p_i$ in $\G_g$ is connected to every vertex that represents $\p_j$ in $\G_g$.

Given an SFM $\mA$, we can easily show that its S-IDNC graph $\G_s$ is the same as the affiliated S-IDNC graph $\G_{as}$ of its G-IDNC graph $\G_g$. Hence, our task becomes to prove that $\chi(\cG_{as})\leqslant\chi(\cG_g)$, where $\chi(\cG_{as})=U_s$. This statement is true if the following property is true:
\begin{property}\label{propt:update}
After removing any clique $\M_g$ from $\G_g$, the chromatic number of the affiliated S-IDNC graph $\G_{as}$ is reduced by at most one.
\end{property}
Since $\G_g$ is nonempty as long as $\G_{as}$ is nonempty, this property indicates that any clique partition solution of $\G_g$ must have a size of at least $\chi(\cG_{as})$, which will prove that $\chi(\cG_{as})\leqslant\chi(\cG_g)$. Property 6 can be proved through induction:
\begin{enumerate}
\item If $\M_g$ does not contain any conflicting data packets in $\G_{as}$, then $\chi(\cG_{as})$ is reduced by at most one;
\item If $\M_g$ contains one pair of conflicting data packets in $\G_{as}$, then $\chi(\cG_s)$ is reduced by at most one;
\item If $\M_g$ already contains $m$ pairs of conflicting data packets in $\G_{as}$, then modifying $\M_g$ to contain one more pair of conflicting data packets in $\G_{as}$ cannot further reduce $\chi(\cG_{as})$.
\end{enumerate}

The first statement is self-evident, because the set of data packets included in such $\M_g$ is a clique of $\G_{as}$. By removing it, $\chi(\cG_{as})$ can be reduced by at most one.

To prove the second statement, without loss of generality we assume that the pair of conflicting data packets is $(\p_1,\p_2)$. Then the set of data packets included in $\M_g$ takes a form of  $\{\M_s,\p_1,\p_2\}$, where $\M_s$ is the set of pair-wise non-conflicting data packets, and thus is a clique of $\G_{as}$. Since $\p_1$ conflicts with $\p_2$, there exists at least one pair of unconnected vertices in $\G_g$ that represent $\p_1$ and $\p_2$. This pair is not included in $\M_g$, and thus is kept after removing $\M_g$ from $\G_g$. Hence, in the updated affiliated S-IDNC graph $\G_{as}'$, $\v_1$ and $\v_2$ exist, and are unconnected. Let the chromatic number of $\G_{as}'$ be $U'$, then the minimum clique partition of $\G_{as}'$ takes a form of $\{\M_{1},\cdots,\M_{U'}\}$, which keeps $\p_1$ and $\p_2$ in different coding sets. Then, since $\M_s$ is a clique of $\G_{as}$, $\{\M_s,\M_1,\cdots,\M_{U'}\}$ is a partition of $\G_{as}$ with a size of $U'+1$. Thus, $U'\geqslant\chi(\cG_{as})-1$, implying that $\chi(\cG_s)$ is reduced by at most one after removing $\M_g$ from $\G_g$.

The proof of the third statement is similar to the second one, and thus is omitted here. According to the above three statements, no matter how many conflicting data packets are included in $\M_g$, after removing $\M_g$ from $\G_g$, the chromatic number of the affiliated S-IDNC graph $\G_{as}$ is reduced by at most one. Therefore, $\chi(\cG_g)\geqslant\chi(\cG_{as})$. Since $\G_{as}$ is the same as $\G_s$, we have $\chi(\cG_g)\geqslant\chi(\cG_{s})$ and Theorem 3 is proved.

\bibliographystyle{IEEEtran}

\end{document}